\newtheorem{Definition}{Definition}
\newtheorem{Theorem}{Theorem}
\newenvironment{proofsketch}{\noindent{\sc Proof Sketch:} \hspace*{0.2em}}{\qed}
\theoremstyle{nonumberbreak}
\newcommand{\squishlist}
{
	\begin{list}{$\bullet$}
		{
			\setlength{\itemsep}{0pt}
			\setlength{\parsep}{3pt}
			\setlength{\topsep}{3pt}
			\setlength{\partopsep}{0pt}
			\setlength{\leftmargin}{1.5em}
			\setlength{\labelwidth}{1em}
			\setlength{\labelsep}{0.5em}
		}
	}
	\newcommand{\squishend}
	{
	\end{list}
}
\DeclareMathOperator*{\argmin}{\arg\!\min}
\def\@copyrightspace{\relax}
\begin{document}

\title{Active Sampler: Light-weight Accelerator \\ for Complex Data Analytics at Scale}

\numberofauthors{1}
\author{
	\alignauthor
	\mbox{Jinyang Gao$^{\ddag}$ \qquad H. V. Jagadish$^{\S}$ \qquad Beng Chin Ooi$^{\ddag}$}\\
	\affaddr{$^\ddag$ National University of Singapore \qquad $^\S$ University of Michigan}\\
	\email{$^{\ddag}$ \{jinyang.gao, ooibc\}@comp.nus.edu.sg \qquad $^\S$ jag@umich.edu}
}

\maketitle

\hyphenation{sto-cha-stic}
\newcommand{\sqr}[1]{{(#1)}^2}
\newcommand{\wti}[0]{\vt{w_{t_i}}}
\newcommand{\pair}[1]{\langle #1 \rangle}
\newcommand{\norm}[1]{\uninorm{#1}^2}
\newcommand{\uninorm}[1]{{\Vert #1 \Vert}_2}
\newcommand{\lnorm}[1]{{\Vert #1 \Vert}_1}
\newcommand{\vt}[1]{{\bf #1}}
\newcommand{\fw}[0]{f_\vt{w}}
\newcommand{\fwt}[0]{f_\vt{w_t}}
\newcommand{\fwti}[0]{f_\wti}
\newcommand{\fwx}[0]{\fw(\vt{x})}
\newcommand{\fwxi}[0]{\fw(\vt{x_i})}
\newcommand{\fwtxi}[0]{\fwt(\vt{x_i})}
\newcommand{\fwtixi}[0]{\fwti(\vt{x_i})}
\newcommand{\datapair}[0]{\pair{\vt{x},y}}
\newcommand{\datapairi}[0]{\pair{\vt{x_i},y_i}}
\newcommand{\pxy}[0]{P_{\datapair}}
\newcommand{\lwxy}[0]{L(\fwx, y)}
\newcommand{\lwxiy}[0]{L(\fwxi, y)}
\newcommand{\lwxyi}[0]{L(\fwxi, y_i)}
\newcommand{\lwtxyi}[0]{L(\fwtxi, y_i)}
\newcommand{\lwtixyi}[0]{L(\fwtixi, y_i)}
\newcommand{\lwxyti}[0]{L(\fw(\vt{x_{t_i}}), y_{t_i})}
\newcommand{\rfw}[0]{R(\fw)}
\newcommand{\erfw}[0]{R_{emp}(\fw)}
\newcommand{\errfw}[0]{R_{reg-emp}(\fw)}
\newcommand{\dd}[0]{{~\rm d}}
\newcommand{\regw}[0]{\rho_f(\vt{w})}
\newcommand{\lossw}[0]{L(\vt{w})}
\newcommand{\losswt}[0]{L(\vt{w_t})}
\newcommand{\pii}[0]{{p_i}}

\newcommand{\giw}[0]{g_i({\vt{w}})}
\newcommand{\gt}[1]{g_t(#1)}
\newcommand{\gtw}[0]{\gt{\vt{w}}}
\newcommand{\gw}[0]{g({\vt{w}})}
\newcommand{\expeci}[1]{E_i[#1]}
\newcommand{\expec}[1]{E[#1]}
\newcommand{\pro}[1]{Pr(#1)}
\newcommand{\shortexp}[1]{e^{#1}}
\newcommand{\logistic}[1]{(1+\exp{(-#1)})}
\newcommand{\gradw}[0]{\nabla_{\vt{w}}}
\newcommand{\grad}[1]{\nabla_{#1}}

\newcommand{\pxyw}[0]{\pro{y | \fwx}}
\newcommand{\pxiyw}[0]{\pro{y | \fwxi}}
\newcommand{\uncertainwi}[0]{U(\vt{w}, \vt{x_i})}
\newcommand{\sigwi}[0]{S(\vt{w}, \vt{x_i})}
\newcommand{\viwi}[0]{I(\vt{w}, \vt{x_i})}
\newcommand{\pw}[0]{P_\vt{w}}
\newcommand{\pwx}[0]{\pw(\vt{x})}
\newcommand{\pwxi}[0]{\pw(\vt{x_i})}
\newcommand{\igwi}[0]{IG(\vt{w}, \vt{x_i})}
\newcommand{\igwj}[0]{IG(\vt{w}, \vt{x_j})}
\newcommand{\pxi}[0]{\pro{\vt{x_i}}}
\newcommand{\wxi}[0]{w_i}
\newcommand{\wxti}[0]{w_{t_i}}
\newcommand{\partialfr}[1]{\frac{\partial}{\partial #1}}
\newcommand{\gradwli}[0]{\gradw \lwxyi}
\newcommand{\gradwtili}[0]{\grad{\wti} \lwtixyi}
\newcommand{\normgradwli}[0]{\uninorm{\gradwli}}
\newcommand{\normgradwtili}[0]{\uninorm{\gradwtili}}
\newcommand{\gradfli}[0]{\partialfr{\fwxi} \lwxiy}
\newcommand{\normgradwfi}[0]{\uninorm{\gradw \fwxi}}
\newcommand{\normgradxi}[0]{\uninorm{\gradw \lwxiy}}
\newcommand{\expgradxi}[0]{E_y[\normgradxi]}
\newcommand{\var}[1]{Var{(#1)}}
\newcommand{\ems}[1]{{\em #1}}
\newcommand{\emb}[1]{{\bf #1}}
\newcommand{\eat}[1]{}

\begin{abstract}

Recent years have witnessed
amazing outcomes from ``Big Models'' trained by ``Big Data''.
Most popular algorithms for model training are iterative.
Due to the surging volumes of data,
we can usually afford to process only a fraction of the training data in each iteration.
Typically, the data are either uniformly sampled or sequentially accessed.

In this paper, we study how the data access pattern can affect model training.
We propose an \ems{Active Sampler} algorithm, where training data with more ``learning value'' to the model are sampled more frequently.
The goal is to focus training effort on valuable instances near the classification boundaries,
rather than evident cases, noisy data or outliers.
We show the correctness and optimality of Active Sampler in theory,
and then develop a light-weight vectorized implementation.
Active Sampler is orthogonal to most approaches optimizing the efficiency of large-scale data analytics, and can be applied to most analytics models trained by 
stochastic gradient descent (SGD) algorithm.
Extensive experimental evaluations
demonstrate that Active Sampler can speed up the training procedure of SVM,
feature selection and deep learning,
for comparable training quality 
by 1.6-2.2x.

\end{abstract}

\section{Introduction}

We live in an age of ever-increasing size and complexity of ``Big Data''. 
To understand the data and decipher the information that truly counts,
many advanced large-scale machine learning models have been devised,
from million-dimension linear models (e.g. Logistic Regression~\cite{logistic}, Support Vector Machine~\cite{pegasos}, feature selection~\cite{largelasso,systemfs}, Principal Component Analysis~\cite{sgdpca}) to complex models like Deep Neural Networks~\cite{systemdl2} or topic models~\cite{svi}. 
While these models have demonstrated value for a wide spectrum of applications~\cite{url,largelasso,cnn}, their complexity causes the training cost to increase dramatically with the surging volume of data. 
This difficulty with scale severely affects the viability of many advanced models on industry-scale applications.
Consequently, accelerating the training procedure of those ``Big Models'' on ``Big Data'' 
has attracted a great deal of interest.

\begin{figure}
	\centering
	\subfloat[Easy]{
		\epsfig{height=3.2cm,width=3.2cm,file=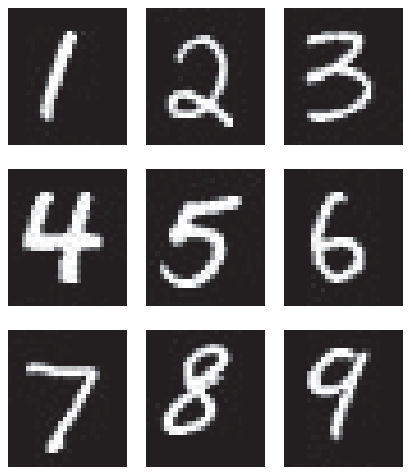}}
	\hspace{6mm}\subfloat[Hard]{
		\epsfig{height=3.2cm,width=3.2cm,file=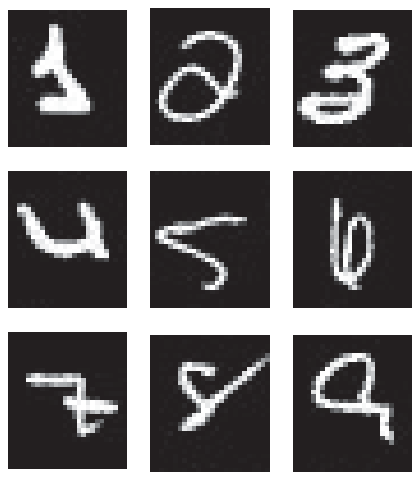}}
	\caption{Information Gain from Training Data}
	\label{fig:intuition}
\end{figure}

 All the models mentioned in the preceding paragraph, and many others, can be formulated as
minimizing a specific objective function based on a set of data observations,
i.e. the Empirical Risk Minimization~\cite{statisticalml} (ERM) problem.
Even though gradient descent~\cite{boydcvx,gdsurvey} has been widely used
for decades, evaluating the full gradient over all the training samples
(i.e. batch gradient descent~\cite{batchgd}) is extremely expensive
in the prevailing scale of millions of training samples.
To reduce the computational cost at every iteration,
Stochastic Gradient Descent (SGD)~\cite{sgd,largesgd} optimizes the objective function
based on a single random sample at each iteration.
Thereby, the computation cost per iteration is reduced greatly, but now 
many more iterations are required to reach a certain degree of accuracy or finally converge~\cite{sgdconvergence}.
This is because the stochastic gradient used
in each iteration is highly sensitive to the specific random sample chosen.
Although the expectation of stochastic gradient is exactly the full gradient,
the large variance causes the direction of stochastic gradient to deviate from
that of the full gradient,
which is the optimal direction to minimize the objective function.
Some samples may even direct the model to the opposite of the correct direction.

It has been shown that reducing the variance of stochastic gradient~\cite{sgdvr,svrg,sag} will lead to a much faster convergence rate.
A commonly used scheme called Mini-batch SGD~\cite{minibatch} is developed for this purpose: by averaging the gradient from a mini-batch of samples, the variance of gradient is significantly reduced, at the cost of some increased computation per iteration.  
The optimal mini-batch size is determined as a trade-off between the increased computational cost per iteration for a larger mini-batch and the increased variance (and hence number of iterations) for a smaller mini-batch.  


In this paper, we seek to further optimize the stochastic gradient
by not merely averaging gradients from more random samples but rather improving
the quality of data samples.
To this end, we propose a light-weight SGD accelerator
inspired by active learning~\cite{activesurvey1,activesurvey2}.
In active learning, training data are selected to maximize the ``learning value''. 
For example, to train a classification model, training data points near
class boundaries are more valuable than points in the interior of a class. 
We adapt the idea of active learning to the SGD optimization context by
choosing samples from not a uniform distribution over the training data but
rather a biased distribution from which we expect to learn more.



Figure~\ref{fig:intuition} gives an example of
how different samples can affect the training efficiency.
The left side contains some images of written digits that are very easy to classify.
For these ``easy-to-classify'' images, most models can classify them correctly
even after a handful of steps.
In subsequent thousands of iterations or even more,
these easy-to-classify images will be sampled and correctly classified with
high confidence,
contributing almost zero gradient to the model.
Consequently, the training time consumed by those easy-to-classify images is largely wasted.
In contrast,
the right side of Figure~\ref{fig:intuition} contains some images that are hard to classify.
By putting more effort on those ``hard-to-classify'' images,
the accuracy of model may improve at a faster rate.


Based on the above intuition, we develop a weighted sampling method called \ems{Active Sampler}.
We find that to maximize the information gain in each iteration, the sample frequency for each training sample should be proportional to the estimate of its gradient magnitude.
Notwithstanding the sampler itself is biased, we show that the original objective function based on uniform weight
can still be correctly minimized by re-weighting the gradient of each sample.
From the view of variance reduction, Active Sampler also provides ta gradient with the smallest variance compared with all weighted sampling methods, including the uniform random sampling.
The net result is a system that requires far fewer iterations for model convergence (or to reach a required accuracy threshold).


Although optimization methods can reduce the number of iterations,
it should also be noted that they may introduce additional computation cost in each iteration.
There have been a great amount of research works~\cite{sag, svrg,s3gd} focusing on accelerating SGD.
Theoretically, these methods have significantly faster convergence rate.
While Momentum and AdaGrad~\cite{adagrad} methods have shown
their effectiveness and have been integrated into practical SGD solver,
most methods are far from being used in practice due to their significant additional computation cost.  For example, the cost of SVRG~\cite{svrg} per iteration is at least three times the cost of standard SGD per iteration. As noted in~\cite{s3gd}, mini-batch SGD still dominates in most cases, due to its light-weight computation and good vectorization.


To make Active Sampler as efficient as possible in practice,
the design principle is to reduce the overhead involved in each iteration.
In the actual implementation, we use lots of existing knowledge to approximate the information that needs additional computation cost, and the sampling distribution is decided by the gradient magnitude in previous iterations. To evaluating the gradient magnitude for each sample, an effective scheme is applied to avoid the explicit calculation of the gradient of each sample. This scheme makes it possible that the computation for multiple samples can still be efficiently vectorized.  Moreover, the computation cost is only $O(m+l)$ for a $m \times l$ parameter matrix. Therefore, for Active Sampler, the computation overhead introduced in each iteration is 
 light-weight, considering its significant contribution in reducing the number of total iterations.


The contributions of our work are summarized as follows:

\squishlist

\item We propose a general SGD accelerator,
called Active Sampler, where more informative training data
is sampled more frequently for model training.
We formalize the problem as SGD optimization for ERM with weighted sampling, and show that the Active Sampler has the largest information gain and the smallest variance among all weighted sampling solutions.

\item We develop a light-weight and fully vectorized algorithm for Active Sampler, making the computation cost of Active Sampler in each iteration comparable to the naive mini-batch SGD.

\item We implement the Active Sampler framework and evaluate its
performance on three popular machine learning algorithms: SVM, feature selection and deep neural network.
Active Sampler reduces the number of iterations to
reach a certain accuracy by half,
while only consuming 10\%-20\% additional computation cost in each iteration.
In short, Active Sampler speeds up the training procedure by more than 1.6x.

\squishend

The remainder of this paper is organized as follows. We first introduce the background in Section~\ref{sec:prel}.
Then we propose the Active Sampler, show its effectiveness in theory and discuss its practical implementation issues in Section~\ref{sec:alg}.
The experimental results are discussed in Section~\ref{sec:exp}. Finally, we review about the related works in Section~\ref{sec:related} and conclude at Section~\ref{sec:conclusion}.

\section{Preliminaries}
\label{sec:prel}

In this section, we first introduce the Empirical Risk Minimization (ERM) framework,
and show its connection to data analytics models.
Then we describe the stochastic gradient descent algorithm, a general solver for the ERM problem,
which we aim to improve upon in this work.

\subsection{Empirical Risk Minimization}

\begin{table}[t]\caption{Examples of ERM Applications}
	\label{tbl:erm}
	\vspace{-0.3cm}
	\center{
		\tiny
		\begin{tabular}{|c|c|c|c|}
			\hline
			Algorithm & $\fwx$ & $\lwxy$ & $\regw$\\
			\hline\hline
			Linear Regression & $\vt{w^T}\vt{x}$ & {\tiny ${(y - \fwx)}^2$} & 0 or $\lambda \norm{\vt{w}}$ \\
			\hline
			Hinge-loss SVM & $\vt{w^T}\vt{x}$ & {\tiny $\max(0, 1 -\fwx*y)$} & 0 or $\lambda \norm{\vt{w}}$\\
			\hline
			Logistic Regression & $\vt{w^T}\vt{x}$ & {\tiny $\log\logistic{\fwx*y}$ } & 0 or $\lambda \norm{\vt{w}}$\\
			\hline
			Feature Selection & $\vt{w^T}\vt{x}$ & {\tiny $\log\logistic{\fwx*y}$ } & $\lambda \lnorm{\vt{w}}$\\
			\hline
			Neural Network & complex & {\tiny $\log\logistic{\fwx*y}$ } & 0 or $\lambda \norm{\vt{w}}$\\
			\hline
			PCA & {\tiny $\vt{w}\vt{w}^T\vt{x}$ - $\vt{x}$} & {\tiny $\norm{\fwx}$ } & 0 \\
			\hline
		\end{tabular}
	}
\end{table}

Empirical Risk Minimization (ERM) is a principle in the statistical learning theory which
forms the basis for defining a family of analytics models.
From the view of ERM, the central idea in machine learning is to learn a model and use it to approximate the data.
The difference between the approximation and the real data is then measured by a loss function, which should be minimized by tuning the parameters of the model.
For the sake of simplicity, in this work we formalize ERM from the supervised learning perspective, where each training instance is a pair $\datapair$ consisting of content $\vt{x}$ and label $y$. For unsupervised problems, label $y$ is a null term $\Phi $, and data can be represented as $\pair{\vt{x},\Phi}$.

\begin{Definition}[Loss Function]\label{def:loss}
	Given a data instance represented as $\datapair$, and a model hypothesis $\fw$ (i.e. a model $f$ with parameter $\vt{w}$), the loss function $\lwxy$ is a disagreement measure function between the model approximation (i.e. prediction) $\fwx$ and the actual label $y$.
\end{Definition}

Here we shall give several examples on the loss measures that are commonly used.
For hard classification problems, the loss measure function can be written as: $\lwxy = I[\fwx \neq y]$.
For linear regression, the loss measure function can be written as: $\lwxy = \norm{\fwx-y}$.
For general linear models where $\fwx = \vt{w}^T\vt{x}$ and $y = \pm 1$, if the loss measure is $\lwxy = \max(0, 1 -\fwx*y)$, then the model is a hinge-loss SVM; if the loss measure is the log logistic function $\lwxy = \log \logistic{\fwx*y}$, then the model is a logistic regression.
For PCA, by using $\vt{w}$ to denote the low rank projection matrix, the problem can be formulated
as $\fwx = \vt{w}\vt{w}^T\vt{x} - \vt{x}$, and $\lwxy = \norm{\fwx}$.

After defining the measure of disagreement between model output and target label, the ultimate goal of learning is naturally to minimize the total disagreement by tuning the model parameters.
This is called \ems{Risk Minimization}~\cite{statisticalml}, which is defined as follows:

\begin{Definition}[Risk Minimization]\label{def:risk}
	Let $\pxy$ to be the distribution of data, the risk associated with model hypothesis $\fw$ is defined as the expectation of the loss for the potential data distribution:
	\begin{equation}
		\rfw = \expec{\lwxy} = \int{\lwxy \dd\pxy}
	\end{equation}
	The goal of learning algorithms is to find the parameter $\vt{w}$ that minimizes the risk:
	\begin{equation}
		\argmin_{\vt{w}}{\rfw}
	\end{equation}
\end{Definition}

However, in general, $\rfw$ cannot be directly minimized since the exact latent data distribution $\pxy$ is unknown. Instead, the common way is to use the distribution of training data to approximate $\pxy$.
Therefore, the \ems{Empirical Risk}~\cite{statisticalml} is used as the optimization target.

\begin{Definition}[Empirical Risk]\label{def:erm}
	The empirical risk is defined as the average of loss on the training set with $n$ instances.	
	\begin{equation}
		\erfw =  \frac{1}{n}\sum_i\lwxyi
	\end{equation}
	To simplify the notation, we use $\lossw$ to denote $\erfw$.
\end{Definition}

According to the VC-dimension theory~\cite{statisticalml}, the difference between real risk and empirical risk may be large when the model hypothesis $\fw$ is too complex while the size of training data $n$ is not large enough. This phenomenon is called over-fitting. To prevent over-fitting, the empirical risk is often regularized to penalize the complexity of model $\fw$:

\begin{Definition}[ERM with Regularization]\label{def:rerm}
	The empirical risk with regularization is defined as the average of loss function on the training set, plus a penalty regularization term $\regw$ based on the complexity of the model $\fw$.
	\begin{equation}
		\errfw =  \lossw + \regw
	\end{equation}
	In ERM with regularization, the goal of learning algorithm is to minimize the empirical risk with regularization, i.e.,
	\begin{equation}
		\label{eq:erm}
		\argmin_{\vt{w}}{\errfw}
	\end{equation}
\end{Definition}

For most applications, we use the $l_2$-norm of parameter $\lambda \norm{\vt{w}}$ as the regularization function $\regw$.
This is actually a Gaussian prior over the parameter distribution from the Bayesian view.
For feature selection methods such as Lasso~\cite{largelasso}, 
the $l_1$-norm regularization $\lambda \lnorm{\vt{w}}$ is used to select those sparse features.

We list some examples of analytics models from ERM family in Table~\ref{tbl:erm},
and show their connections.
\eat{When we use $\vt{w}$ as the variable, we assume that
$\lwxyi$ and $\regw$ are $\gamma$-Lipschitz continuous for some constant $\gamma$ in most cases (i.e. we can get the gradient of these functions, and the gradient is bounded by some constant).
We can see that all the above models follow this assumption.}

\subsection{Stochastic Gradient Descent}

\begin{table}[t]\caption{Common Notations}
	\label{tbl:notations}
	\vspace{-0.3cm}
	\center{
		\begin{tabular}{|c|l|}
			\hline
			Notation & Meaning\\
			\hline\hline
			$\datapair$ & training instance\\
			\hline
			$\vt{w}$ & model parameters\\
			\hline
			$\fwx$ & model prediction for data $\vt{x}$\\
			\hline
			$\lwxy$ & loss on instance $\datapair$\\
			\hline
			$\lossw$ & empirical risk\\
			\hline
			$\regw$ & regularization term\\
			\hline
			$\gradw$ & gradient operator\\
			\hline
			$\gradw \lossw$ & batch gradient\\
			\hline
			$\giw$ & stochastic gradient\\
			\hline
			$\pii$ & sampling probability for $\datapairi$ \\
			\hline
			$\var{\giw}$ & scalar variance of $\giw$ \\
			\hline
		\end{tabular}
	}
\end{table}

To optimize the ERM problem described in Equation~\ref{eq:erm},
batch gradient descent method is used to
iteratively alter the parameter towards the fastest direction
to minimize the objective function.
By defining the step size using the learning rate $\eta$,
batch gradient descent method uses the following updating rule to optimize the parameter:
\begin{eqnarray}
	\label{eq:gd}
    \nonumber \vt{w}_{new} & = & \vt{w} - \eta \gradw {(\lossw + \regw)} \\
	& = & \vt{w} - \eta \gradw \regw - \frac{\eta}{n}\sum_i{\gradw \lwxyi}
\end{eqnarray}
As can be observed from Equation~\ref{eq:gd},
we need to evaluate the gradients $\gradw \lwxyi$ for all training instances at each step,
making the computation cost of $\gradw \lossw$ extremely expensive.
To avoid this cost,
stochastic gradient methods use an inexact gradient which is estimated from random samples.

\begin{Definition}[Stochastic Gradient Descent]\label{def:sgd}
	In stochastic gradient descent,
	the true gradient $\gradw \lossw$ is approximated by a stochastic gradient $\giw$.
	\begin{eqnarray}
		\label{eq:sgd}
		\vt{w}_{new} & = & \vt{w} - \eta \gradw \regw - \eta \giw
	\end{eqnarray}
	Taking $\giw$ as a random variable, the expectation of $\giw$ should equal to the gradient of $\lossw$, i.e.
	\begin{eqnarray}
		\expeci{\giw} & = & \gradw \lossw
	\end{eqnarray}
	In the standard SGD algorithm, $\giw$ is obtained by simply evaluating the gradient at a random single instance $i$:
	\begin{eqnarray}
		\giw & = & \gradw \lwxyi
	\end{eqnarray}
	where $i$ is randomly drawn from $\{ 1, ..., n\}$, and the sampling probability $\pii$ for each instance $i$ is $1/n$.
	The scalar variance of stochastic gradient is denoted as $\var{\giw}$, defined by $\expeci{\norm{\giw - \gradw \lossw}}$, which is a scalar instead of the covariance matrix.
\end{Definition}

Table~\ref{tbl:notations} lists most of the important notations used throughout this paper.
Unless otherwise specified, variance used in the paper refers to the scalar variance.
Intuitively, the requirement $\expeci{\giw} = \gradw \lossw$ is to guarantee that the SGD algorithms will converge at the optimal point~\cite{sgdconvergence}, as the expectation of update in SGD will be a zero vector at the point where batch gradient descent algorithm converges.
Obviously, the standard SGD algorithm satisfies this requirement.
\begin{eqnarray}
	\nonumber \expeci{\giw} & = & \sum_i \pii \giw \\
	\nonumber & = & \sum_i \frac{1}{n} \gradw \lwxyi \\
	\nonumber & = &  \gradw \frac{1}{n} \sum_i \lwxyi \\
	& = &  \gradw \lossw
\end{eqnarray}

Although the training cost per iteration for SGD is extremely light-weight compared to the batch gradient algorithm,
the main drawback of SGD is that $\giw$ is not the exact $\gradw \lossw$.
Therefore, the direction of the stochastic gradient $\giw$ differs from the optimal direction $\gradw \lossw$. This phenomenon causes SGD algorithm to be less efficient and take more iterations to converge or reach a certain accuracy.
It has been shown by~\cite{sgdvr,svrg,sag} that simply reducing the variance of stochastic gradient will increase the convergence rate of SGD algorithms.
However, most variance reduction techniques require additional computation
cost compared with the standard SGD.
In essence, 
there is a trade-off between the number of iterations
required to reach a certain accuracy and the computational cost per iteration.
Therefore, the main goal of optimizing SGD algorithm is to reduce the variance of $\giw$,
while keeping the computation of $\giw$ light-weight.

\section{Active Sampler}
\label{sec:alg}

\subsection{Overview}

In this paper, we revisit the SGD algorithm from a brand new angle -- the information gain of the
model at each iteration.
We can regard the SGD algorithm as an active learning procedure which sequentially takes
samples from the dataset and refines its model.
Naturally,
training the model using samples with more information
would facilitate faster improvement of the model.
We term our weighted sampling strategy as \ems{Active Sampler}.
The intuition here is that a larger number of training samples
are not helpful for refining the model
(or at least not helpful at a certain training stage), 
including data that are too evident to predict,
data that are noisy, and
data that have just been visited.
By simply skipping these samples,
we can save a significant amount of training time.
In contrast,
the samples that are close to
the border of class may be very helpful to refine the model (or even define the model in some cases such as SVM).
This idea is very similar to active learning but with a major difference
-- the objective of active learning is to reduce the number of training samples,
while the objective of active sampling is to reduce the number of training iterations.

To exploit information gain as a means to speed up SGD training, three issues
have to be addressed:
(1) define what is the information gain for model training from each training sample;
(2) adapt the Active Sampler into the SGD framework and study how information gain
can help speeding up SGD;
(3) design a light-weight implementation that can be applied to real systems.
We shall address these three issues in the following subsections.

\subsection{Information Gain}

In this subsection,
we define the information gain directly following the basic intuition from the training of typical soft margin classifiers.
Later, we will fit this initial idea
into a formal SGD framework and provide
a rigorous theoretical analysis that illustrates how this strategy can benefit the
performance of SGD for all sorts of ERM problems in the next subsection.

In soft margin classifiers,
instead of giving a single label as the prediction,
the classifier outputs a probabilistic distribution over latent labels.
Recently proposed classification algorithms (e.g. Lasso, Neural Network and Soft SVM etc.)
are typically trained as soft margin classifiers,
since the continuous optimization is much efficient than the discrete optimization, which is a NP-Hard problem.


\begin{Definition}[Soft Margin Classifier]\label{def:smc}
	Given a predictor $\fw$, the soft probabilistic classifier is defined by using log logistic function as the loss function, i.e.
	\begin{eqnarray}
		\label{eq:smc}
		\lwxy = \log\logistic{\fwx*y}
	\end{eqnarray}
	Obviously, all algorithms in Table~\ref{tbl:erm} using the logistic loss function are soft margin classifiers.
	The rationale is that the logistic function is used to transfer the prediction $\fw \in R$ to a classification probability, i.e.
	\begin{eqnarray}
		\label{eq:smcpro}
		\pxyw = 1 / \logistic{\fwx * y}
	\end{eqnarray}
	Then the log-likelihood $\log \pxyw$ is maximized, i.e. the loss is $\log\logistic{\fwx*y}$.
\end{Definition}

\begin{figure}[t]
	\centering
	\epsfig{width=6cm,file=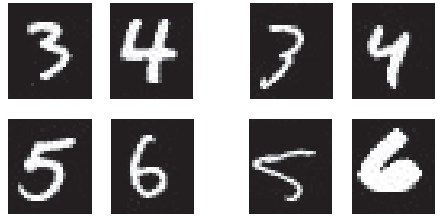}
	\caption{Uncertainty}
	\label{fig:uncertain}
\end{figure}

Now, let us analyze the possible factors that may affect the information
gain of a model from each training sample.
First, from the view of active learning,
the information gain by revealing a label which can be easily predicted is very limited.
This is also true for the SGD algorithm -- visiting a sample that the model can always classify correctly is not helpful,
as the loss cannot be further reduced if the loss is already close to $0$.
A model can only learn from samples which are uncertain in prediction.
Figure~\ref{fig:uncertain} shows an example to illustrate why uncertainty helps the model to improve:
the images on the left
are very easy to predict, and therefore $\pxyw$ is almost 100\%.
As a result, the loss $\lwxy$ for each of those images is almost zero and has little room for further optimization.
In contrast,
the images on the right is much harder to predict.
By selecting them as samples for optimization purpose,
the loss $\lwxy$ on those samples could be significantly reduced and the average performance of model is hence improved. In information theory~\cite{statisticalml},
the information gain by revealing a random variable is usually defined as the entropy of that random variable:
     
\begin{Definition}[Uncertainty]\label{def:uncertain}
	The uncertainty of a training instance $\vt{x_i}$ for a model $\fw$ is defined as the entropy of the $\fwxi$, i.e.
	\begin{eqnarray}
		\label{eq:uncertain}
		\uncertainwi = - \sum_y \pxiyw \log \pxiyw
	\end{eqnarray}
\end{Definition}

Second, not all the training instances contribute equally
to the model performance.
For example, though the model may always be uncertain about the label for noisy data,
this does not mean that noisy data are more helpful to improve the model performance.
This is because the uncertainty measure only evaluates the information inside the label,
but not its contribution to the model.
Therefore, we introduce another measure called \ems{significance}
to evaluate the efficiency of information transfer from the data to the model.
Intuitively, the output of a noisy instance is not sensitive to the change of the parameter.
When the output of a data instance is sensitive to the change of parameter, its loss will be significantly reduced even with tiny changes of the parameter,
which provides a clear instruction on how to reduce the loss by tuning the parameter.
The right hand side of Figure~\ref{fig:sig} shows the images that are noisy and with less significance.

\begin{Definition}[Significance]\label{def:sig}
	The significance of a training instance $\vt{x_i}$ for a model $\fw$ is defined as its sensitivity to parameter change:
	\begin{eqnarray}
	\label{eq:sig}
	\sigwi = \uninorm{\gradw \fwxi}
	\end{eqnarray}
	$\uninorm{\gradw \fwxi}$ is the maximal change of $\fwxi$ when the parameter $\vt{w}$ changes an unit distance.
\end{Definition}

\begin{figure}[t]
	\centering
	\epsfig{width=6cm,file=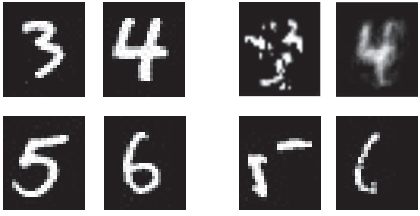}
	\caption{Significance}
	\label{fig:sig}
\end{figure}

Here, we give a comparison between uncertainty and significance:
uncertainty measures the expectation of accuracy on the current model, while significance measures the potential improvement of accuracy by tuning the model.
Therefore, 
instances that are easy to classify usually have low uncertainty but high significance;
noisy instances usually have high uncertainty but low significance,
while valuable border instances that have not been well learned usually have both high uncertainty and high significance.

Third, the information gain in one iteration may overlap with the information obtained in
earlier training steps.
For example, visiting the training instance that has just been
trained usually does not provide extra information than what has been
derived in the previous visit.
\eat{The information of these two steps provided overlap with each other,
	since the change of model within one iteration is trivial.}
However, for a completely new instance that hasn't been trained, there may be no overlap between the information gain and information in previous steps.
Therefore, we use the visiting interval to measure the effect of information overlap:

\begin{Definition}[Interval]\label{def:his}
	The visiting interval $\viwi$ of a training instance $\vt{x_i}$ for a model $\fw$ is defined as the number of iterations since the last time $\vt{x_i}$ was used in training.  A larger interval provides less information overlap and more pure information gain.
\end{Definition}

Combining all the three factors together,
we define the information gain of model $\fw$ from training instance $\vt{x_i}$ as $\igwi$:
\begin{eqnarray}
\label{eq:igwi}
\igwi = \uncertainwi * \sigwi * \viwi
\end{eqnarray}
The objective of our Active Sampler is to choose the training instance $\vt{x_i}$ with the
largest $\igwi$.

\begin{Theorem}[Information Gain Maximization]
	\label{thm:intuition}
	By ch\-oosing the largest $\igwi$ in each iteration, the sampling frequency $\pii$ of each training instance $\vt{x_i}$ should be proportional to its expectation of the gradient magnitude, i.e.
	\begin{eqnarray}
	\label{eq:theintuition}
	\pii = \frac{\expgradxi}{\sum_i \expgradxi}
	\end{eqnarray}
\end{Theorem}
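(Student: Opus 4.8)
The plan is to reduce the greedy ``pick-the-largest-$\igwi$'' rule to the closed-form frequency in Equation~\ref{eq:theintuition} through three moves: a chain-rule decomposition of the per-instance loss gradient, an identification of the entropy-style uncertainty with the expected magnitude of the loss derivative for soft-margin classifiers, and a steady-state analysis of the greedy scheduler.

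First I would apply the chain rule: $\gradw \lwxiy = \left(\gradfli\right)\,\gradw\fwxi$, so that $\normgradxi = \left|\gradfli\right|\cdot\uninorm{\gradw\fwxi} = \left|\gradfli\right|\cdot\sigwi$, using the definition of significance $\sigwi = \uninorm{\gradw\fwxi}$. Since $\sigwi$ does not depend on $y$, taking the expectation over the model-predicted label distribution $\pxiyw$ pulls it out of the expectation: $\expgradxi = \sigwi\cdot E_y[\left|\gradfli\right|]$.

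Second, for a soft-margin classifier $\lwxiy = \log\logistic{\fwxi*y}$, a direct differentiation of the log-logistic loss gives $\left|\gradfli\right| = \pro{-y\mid\fwxi}$, the probability the model assigns to the \emph{wrong} label. Averaging over $y\sim\pxiyw$ with two labels and writing $p = \pro{1\mid\fwxi}$ yields $E_y[\left|\gradfli\right|] = 2p(1-p)$, which --- exactly like the binary entropy $\uncertainwi = -\sum_y\pxiyw\log\pxiyw$ --- is symmetric under $p\leftrightarrow 1-p$, vanishes at $p\in\{0,1\}$, and peaks at $p=1/2$. Treating these two uncertainty surrogates as proportional (the same substitution that trades Gini impurity for entropy in decision-tree learning) gives $\expgradxi \propto \uncertainwi\cdot\sigwi$; equivalently $\uncertainwi\,\sigwi$ is, up to a constant, the expected gradient magnitude at $\vt{x_i}$, and $\igwi = \uncertainwi\,\sigwi\,\viwi$ is this quantity scaled by the visiting interval.

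Third I would analyze the dynamics of repeatedly choosing $\argmax_i \igwi$. Since $\uncertainwi$ and $\sigwi$ drift only slowly as $\vt{w}$ is updated, set $a_i \approx \uncertainwi\cdot\sigwi$ as a per-instance ``rate''; between two consecutive visits to $\vt{x_i}$ the interval factor $\viwi$ grows linearly, so $\igwi$ grows at rate $a_i$. In steady state the greedy rule fires on $\vt{x_i}$ precisely when $a_i\viwi$ reaches the common ceiling set by the competing instances, so the inter-visit interval of $\vt{x_i}$ is $\propto 1/a_i$ and its long-run selection frequency satisfies $\pii \propto a_i$; normalizing and substituting $a_i \propto \expgradxi$ from the previous step gives exactly Equation~\ref{eq:theintuition}. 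The hard part is this third step: the steady-state (fluid-limit) argument needs the assumption that uncertainty and significance change slowly relative to the sampling timescale, together with a careful averaging showing the greedy max-rule equalizes the attained $\igwi$ across instances. The entropy-versus-$2p(1-p)$ identification is likewise an approximation, faithful to the paper's stated aim of following the training intuition rather than establishing an exact equality, while the chain-rule step is routine.
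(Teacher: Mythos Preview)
Your three-move structure---chain rule, uncertainty identification, steady-state scheduler---is exactly the paper's own plan. The scheduler argument (your third step) is the paper's first paragraph of proof sketch almost verbatim: each $\igwi$ accrues at rate $\uncertainwi\cdot\sigwi$, is reset to zero on selection, and in the long run the greedy max-rule equalizes the peak values so that $\pii\propto\uncertainwi\cdot\sigwi$. Your chain-rule factorization matches the paper's line $\normgradxi = \bigl|\gradfli\bigr|\cdot\normgradwfi$.

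The one substantive difference is in the second step, and here you are actually more careful than the paper. The paper's sketch silently asserts $\gradfli = \log\pxiyw$ to pass from $\sum_y \pxiyw\,\gradfli$ to $\sum_y \pxiyw\log\pxiyw$ and thereby claims an \emph{exact} identity $\expgradxi = \uncertainwi\cdot\sigwi$. That equality does not hold for the log-logistic loss: as you compute, $\bigl|\gradfli\bigr| = 1-\pxiyw = \pro{-y\mid\fwxi}$, so $E_y\bigl[\,\bigl|\gradfli\bigr|\,\bigr] = 2p(1-p)$, not the Shannon entropy. Your explicit flagging of the Gini-versus-entropy substitution as an approximation is the honest version of what the paper glosses over; it costs nothing in the final statement (both surrogates are monotone in one another on $[0,\tfrac12]$ and yield the same proportionality), but it is worth knowing that the paper's displayed chain of equalities is not literally correct and that your derivation is the sound one.
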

\begin{proofsketch}
At each iteration, $\igwi$ for each instance will increase $\uncertainwi$ * $\sigwi$, and the instance with the largest $\igwi$ will be selected as sample. After an instance is sampled, its $\igwi$ will be set to zero as its $\viwi$ becomes zero. Therefore, as the number of iterations grows, the sampling frequency for each instance should be proportional to its $\uncertainwi$ * $\sigwi$, considering that the largest $\igwi$ selected in each iteration should has a similar value. 	
	Meanwhile,
	\begin{eqnarray}
	\nonumber    & \expgradxi \\
	\nonumber   =&  \sum_y \pxiyw \normgradxi \\
	\nonumber   =& \sum_y \pxiyw \gradfli \normgradwfi  \\
	\nonumber   =& \sum_y (\pxiyw * \log \pxiyw) \normgradwfi  \\
	=& \uncertainwi * \sigwi
	\end{eqnarray}
\end{proofsketch}

\subsection{Weighted SGD Algorithm and Analysis}

While the above intuition suggests that different training instances 
should be sampled at different frequencies,
directly changing the sampling frequency will result in a bias in
the target of an optimization.
\begin{eqnarray}
	\label{eq:unequalweight}
	\nonumber \expeci{\giw} & = & \sum_i \pii \giw \\
	& = & \gradw \sum_i \pii \lwxyi
\end{eqnarray}
The loss function to be minimized is $\sum_i \pii \lwxyi$ instead of $\sum_i \frac{1}{n} \lwxyi$.
The weight for each training instance is unequal, which may affect the accuracy of the model.
For this reason, 
using $\gradw \lwxyi$ directly as $\giw$ is inappropriate. 
Instead, we should guarantee that $\expeci{\giw}$ is $\gradw \lossw$.

\begin{Theorem}[Weighted SGD]
	Given any sampling distribution $\{p_1$, ..., $\pii$, ..., $p_n\}$, 
	to get a SGD algorithm that optimizes $\lossw$, $\giw$ should be re-weighted to $\frac{\gradw \lwxyi}{n * \pii}$.
\end{Theorem}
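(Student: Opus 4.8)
The plan is to verify directly that the proposed re-weighting turns the stochastic gradient into an \emph{unbiased} estimator of the batch gradient $\gradw \lossw$, which by Definition~\ref{def:sgd} is exactly the condition that guarantees the resulting SGD iteration optimizes $\lossw$ (and, combined with the deterministic regularization term $\gradw \regw$ in Equation~\ref{eq:sgd}, the full regularized objective $\errfw$ of Equation~\ref{eq:erm}). So the whole argument reduces to a single expectation computation, using only the definition $\lossw = \erfw = \frac{1}{n}\sum_i \lwxyi$ and linearity of $\gradw$.

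Concretely, I would fix the sampling step as: at each iteration draw an index $i \in \{1, \dots, n\}$ with probability $\pii$, and set $\giw = \frac{\gradw \lwxyi}{n \pii}$. Then compute
\begin{eqnarray}
	\nonumber \expeci{\giw} & = & \sum_i \pii \cdot \frac{\gradw \lwxyi}{n \pii} \\
	\nonumber & = & \frac{1}{n} \sum_i \gradw \lwxyi \\
	& = & \gradw \lossw.
\end{eqnarray}
The sampling-dependent factor $\frac{1}{n \pii}$ cancels the bias $\pii$ introduced by the non-uniform sampler, recovering the uniform $\frac{1}{n}$ weighting on every instance. Hence $\expeci{\giw} = \gradw \lossw$ holds for \emph{any} distribution $\{p_1, \dots, p_n\}$, so the biased Active Sampler still minimizes the original (uniformly weighted) empirical risk rather than the skewed objective $\gradw \sum_i \pii \lwxyi$ of Equation~\ref{eq:unequalweight}.

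There is no real obstacle here; the only point needing care is well-definedness of the weight $\frac{1}{n \pii}$, which requires $\pii > 0$ for every instance that can contribute a nonzero gradient (an instance never sampled cannot enter the estimate), and this holds for the Active Sampler distribution of Theorem~\ref{thm:intuition} as long as each instance retains nonzero uncertainty and significance. I would close by remarking why this re-weighting is essentially forced: writing $\giw = c_i\, \gradw \lwxyi$ for some sampling-dependent scalars $c_i$, unbiasedness demands $\sum_i \pii c_i \gradw \lwxyi = \frac{1}{n}\sum_i \gradw \lwxyi$; requiring this across arbitrary loss landscapes (where the per-instance gradient directions are effectively unconstrained) pins down $\pii c_i = \frac{1}{n}$, i.e. $c_i = \frac{1}{n \pii}$, the claimed weight.
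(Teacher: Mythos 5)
Your proposal is correct and follows essentially the same route as the paper: both reduce the claim to the single expectation computation $\expeci{\giw} = \sum_i \pii \cdot \frac{\gradw \lwxyi}{n \pii} = \gradw \lossw$, with the paper deriving the weight by solving $\pii \wxi = \frac{1}{n}$ and you verifying the given weight and then noting it is forced (your added remark that well-definedness needs $\pii > 0$ is a sensible minor refinement the paper omits).
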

\begin{proof}
	To get a SGD algorithm that optimizes $\lossw$, 
    we need $\expeci{\giw}$ to be $\gradw \lossw$.
	By scaling $\giw$ $\wxi$ times and solve $\expeci{\giw} = \gradw \lossw$, we have:
	\begin{eqnarray}
		\label{eq:equalweight}
		\nonumber  \expeci{\giw} & = & \gradw \lossw\\
		\nonumber \Rightarrow \quad \quad \expeci{\wxi \gradw \lwxyi} & = & \gradw \lossw\\
		\nonumber \Rightarrow  \quad \sum_i \pii \wxi \gradw \lwxyi & = &  \sum_i \frac{1}{n} \gradw \lwxyi\\
		\Rightarrow \quad \quad \quad \quad \quad \quad \quad \quad \quad \quad \pii \wxi  & = & \frac{1}{n}
	\end{eqnarray}
	Therefore, $\wxi = 1 / (n * \pii)$.
\end{proof}

Next, we show that setting $\pii$ proportional to the gradient magnitude will minimize 
the variance in stochastic gradient $\giw$ in all weighted sampling solutions 
that $\expeci{\giw} = \gradw \lossw$.

\begin{figure}[t]
	\centering
	\epsfig{width=8.5cm,file=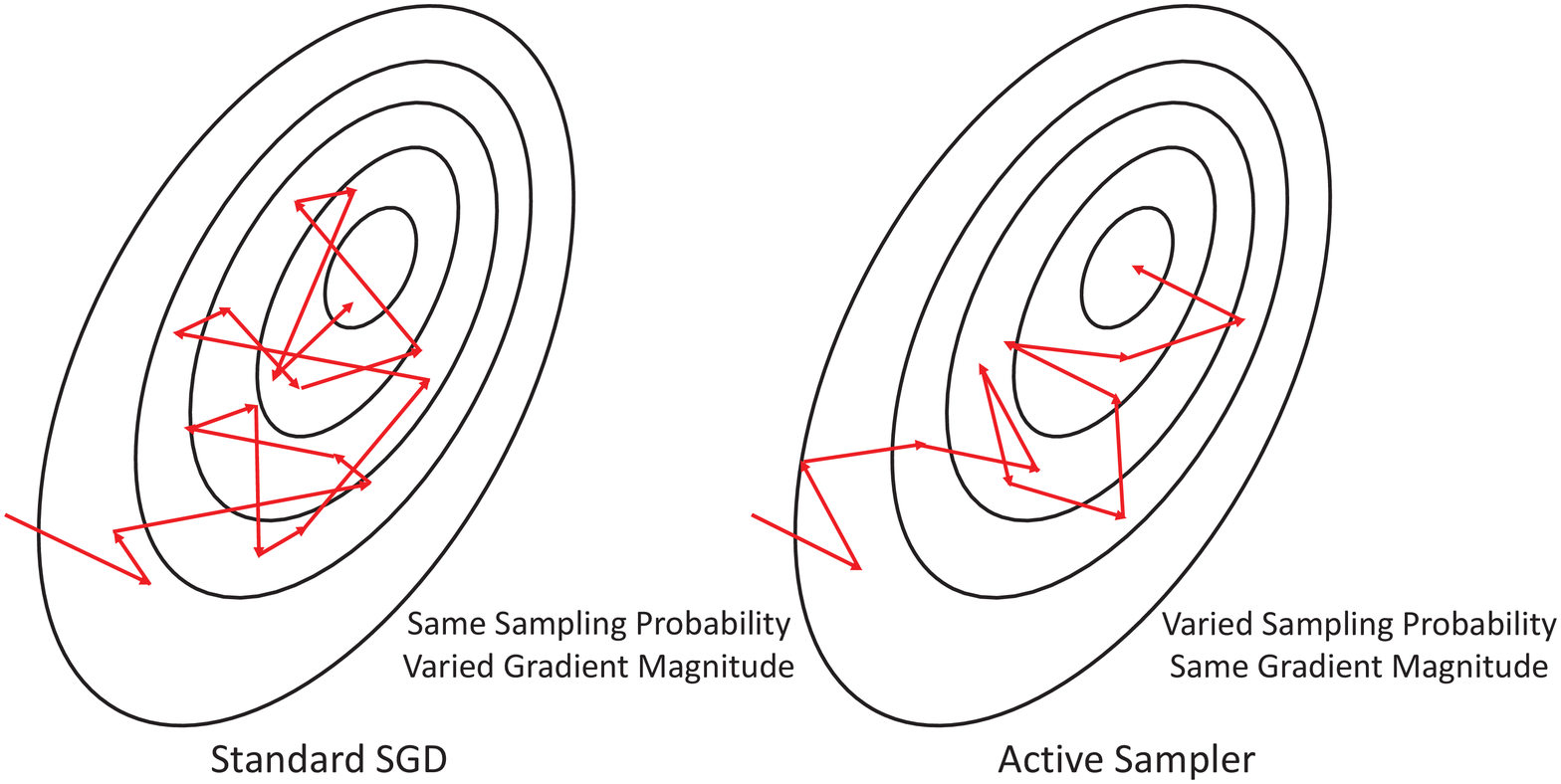}
	\caption{Comparison with standard SGD}
	\label{fig:sgd}
\end{figure}

\begin{Theorem}[Optimal Weighted SGD]
	\label{thm:main}
 Let$\pii$ denote the sampling probability of training instance $\datapairi$ in a weighted SGD algorithm 
 where $\expeci{\giw} = \gradw \lossw$.
$\giw$ becomes a stochastic gradient with value $\frac{\gradw \lwxyi}{n * \pii}$ with sampling probability $\pii$.
To get the weighted SGD algorithm with the smallest variance of stochastic gradient (i.e. $\var{\giw}$), 
for each instance, $\pii$ should be proportional to its magnitude of stochastic gradient $\normgradwli$.
	\begin{eqnarray}
		\pii = \frac{\normgradwli}{\sum_i \normgradwli}
	\end{eqnarray}
\end{Theorem}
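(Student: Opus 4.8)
The plan is to treat this as a constrained optimization problem: minimize $\var{\giw}$ over all valid sampling distributions $\{p_1,\dots,p_n\}$ subject to $\sum_i \pii = 1$, $\pii \geq 0$. By the previous theorem, the re-weighting $\wxi = 1/(n\pii)$ is forced once $\pii$ is fixed, so the stochastic gradient takes the value $\frac{\gradw \lwxyi}{n\pii}$ with probability $\pii$, and there are no remaining free parameters beyond the $\pii$ themselves. First I would write out the variance explicitly. Since $\expeci{\giw} = \gradw \lossw$ is fixed (independent of the $\pii$), minimizing $\var{\giw} = \expeci{\norm{\giw - \gradw\lossw}}$ is equivalent to minimizing the second moment $\expeci{\norm{\giw}}$. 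That second moment is
\begin{eqnarray}
\nonumber \expeci{\norm{\giw}} &=& \sum_i \pii \, \norm{\frac{\gradw \lwxyi}{n\pii}} \\
\nonumber &=& \frac{1}{n^2}\sum_i \frac{\norm{\gradw \lwxyi}}{\pii}.
\end{eqnarray}

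So the problem reduces to minimizing $\sum_i a_i / \pii$ where $a_i = \norm{\gradw \lwxyi}$, subject to $\sum_i \pii = 1$. The next step is the standard argument: either invoke the Cauchy–Schwarz inequality in the form $\left(\sum_i \sqrt{a_i}\right)^2 = \left(\sum_i \frac{\sqrt{a_i}}{\sqrt{\pii}}\cdot\sqrt{\pii}\right)^2 \leq \left(\sum_i \frac{a_i}{\pii}\right)\left(\sum_i \pii\right) = \sum_i \frac{a_i}{\pii}$, with equality exactly when $\frac{\sqrt{a_i}}{\sqrt{\pii}} \propto \sqrt{\pii}$, i.e. $\pii \propto \sqrt{a_i} = \uninorm{\gradw \lwxyi} = \normgradwli$; or equivalently set up a Lagrangian $\sum_i a_i/\pii + \mu(\sum_i \pii - 1)$, take $\partial/\partial \pii = -a_i/\pii^2 + \mu = 0$ to get $\pii = \sqrt{a_i/\mu}$, and normalize. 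Either route yields $\pii = \frac{\normgradwli}{\sum_j \normgradwli}$, which is the claimed optimum; plugging back confirms the minimum value is $\frac{1}{n^2}\left(\sum_i \normgradwli\right)^2 - \norm{\gradw\lossw}$.

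I would also note the comparison claimed in the surrounding text: since uniform sampling $\pii = 1/n$ is one feasible point, the optimal variance is no larger than that of standard SGD, and this recovers the ``smallest variance among all weighted sampling methods, including uniform'' statement. I do not expect a serious obstacle here — the computation is routine once the reduction to $\min \sum_i a_i/\pii$ is made. The one point requiring a word of care is the reduction step itself: one must use that $\expeci{\giw}$ is the same constant vector for every admissible $\{\pii\}$ (guaranteed by the Weighted SGD theorem), so that dropping $-\norm{\gradw\lossw}$ from the objective is legitimate; and one should observe the boundary cases ($\pii \to 0$ when $a_i > 0$) send the objective to infinity, so the interior critical point found by Cauchy–Schwarz / Lagrange is genuinely the global minimum over the simplex.
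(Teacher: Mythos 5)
Your proposal is correct and follows essentially the same route as the paper: expand $\var{\giw}$ as the second moment minus the fixed term $\norm{\gradw\lossw}$, reduce to minimizing $\sum_i \norm{\gradwli}/(n^2\pii)$ over the simplex, and apply Lagrange multipliers to obtain $\pii \propto \normgradwli$. Your added Cauchy--Schwarz derivation and the boundary-case remark justifying global optimality are small refinements beyond what the paper writes, but the argument is the same.
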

\begin{proof}	
	\begin{eqnarray}
		\nonumber & \var{\giw} \\
		\nonumber = & \expeci{\norm{\giw}} - \norm{\expeci{\giw}} \\
		\nonumber = & \expeci{\frac{\norm{\gradwli}}{{(n\pii)}^2}}  - \norm{\expeci{\frac{\gradwli}{n\pii}}}\\
		\nonumber = & \sum_i \pii \frac{\norm{\gradwli}}{{(n\pii)}^2} - \norm{\sum_i \pii \frac{\gradwli}{n\pii}}\\
		= & \sum_i \frac{\norm{\gradwli}}{n^2\pii} - \norm{\gradw \lossw}
	\end{eqnarray}
	To minimize $\var{\giw}$ by tuning $\pii$, subjecting to $\sum_i \pii = 1$, according to Lagrange multiplier method, we have:
	\begin{eqnarray}
		\left\{
		\begin{array}{c}
			\label{eq:first}
			\partialfr{\pii} (\var{\giw}+ \lambda (\sum_i \pii -1)) = 0 \\
			\label{eq:second}
			\partialfr{\lambda} (\var{\giw}+ \lambda (\sum_i \pii -1)) = 0 \\
		\end{array}
		\right.
	\end{eqnarray}
	by solving Equation~\ref{eq:first}, we have:
	\begin{eqnarray}
		\lambda - \frac{\norm{\gradwli}}{{(n\pii)}^2} = 0
	\end{eqnarray}
	Therefore, $\frac{\norm{\gradwli}}{{(n\pii)}^2}$ is a constant value for all training instances, which means $\normgradwli$ is proportional to $\pii$.
	Considering that $\sum_i \pii = 1$, we have:
	\begin{eqnarray}
		\pii = \frac{\normgradwli}{\sum_i \normgradwli}
	\end{eqnarray}
\end{proof}

\begin{algorithm}[t]
	\caption{Optimal Weighted SGD}
	\label{alg:optimalsgd}
	\KwIn{Initial $\vt{w_0}$, $T$}
	\KwOut{Final $\vt{w_T}$ }
	\For{$t = 1, ..., T$}{
		\ForEach{$i = 1, ..., n$}{
			$Grad[i] = \normgradwli$\;
		}
		$SumGrad = \sum_i Grad[i]$\;
		\ForEach{$i = 1, ..., n$}{
			$\pii$ = $Grad[i]/SumGrad$\;
		}
		sample $i$ from $\{1, ..., n\}$ based on distribution $\{p_1, ..., p_n\}$\;
		$\giw = \gradw \lwxyi / n\pii$\;
		$\vt{w_t} = \vt{w_{t-1}} - \eta \gradw \regw - \eta \giw$\;
	}
\end{algorithm}

Theorem~\ref{thm:main} can be viewed as a refined 
version of Theorem~\ref{thm:intuition} when label $y$ is observed. 
Theorem~\ref{thm:main} gives a rigorous
 explanation about our previous intuition from the variance reduction view of SGD optimization. 
 Note that this result makes no assumptions about the soft margin classification and applies to all sorts of ERM problems.
Algorithm~\ref{alg:optimalsgd} describes
 the optimal weighted SGD algorithm, 
 where the computation cost in each
  iteration will be optimized in the next subsection.

Another insight we observe from Theorem~\ref{thm:main} is
that in order to minimize the variance of SGD by using weighted sampling, 
$\giw$ should have the exact same magnitude for all instances.
\begin{eqnarray}
	\nonumber \uninorm{\giw} & = & \uninorm{\frac{\gradwli}{n\pii}}\\
	\nonumber & = & \frac{\uninorm{\gradwli} }{n * \frac{\normgradwli}{\sum_i \normgradwli}}\\
	& = & \frac{\sum_i \normgradwli}{n}
\end{eqnarray}
This suggests that only the \emb{direction} of the stochastic gradient is determined
by the training instance.
Further, the step size (i.e. the magnitude of $\eta \giw$) is of no consequence to
 the training instance and 
 is decided by the global learning rate.
As a result,
the change of parameter 
in Active Sampler is much more steady than the
standard SGD.
This property agrees with the original purpose of gradient descent methods, 
as the gradient only indicates the fastest direction to minimize the objective loss function, 
without any indication on the step size.
Figure~\ref{fig:sgd} shows the comparison between standard SGD and Active Sampler.
For standard SGD, all training instances have the same sampling probability, while their gradient magnitudes vary. For Active Sampler, all training instances have the same gradient magnitude,
while their sampling probabilities vary.
Both methods have the same expectation of gradient,
however, Active Sampler has a smaller variance,
resulting in a faster and more stable convergence process.

\subsection{Practical Implementation Issues}
\label{sec:implementation}

\begin{figure}[t!]
	\centering
	\epsfig{width=8.5cm,file=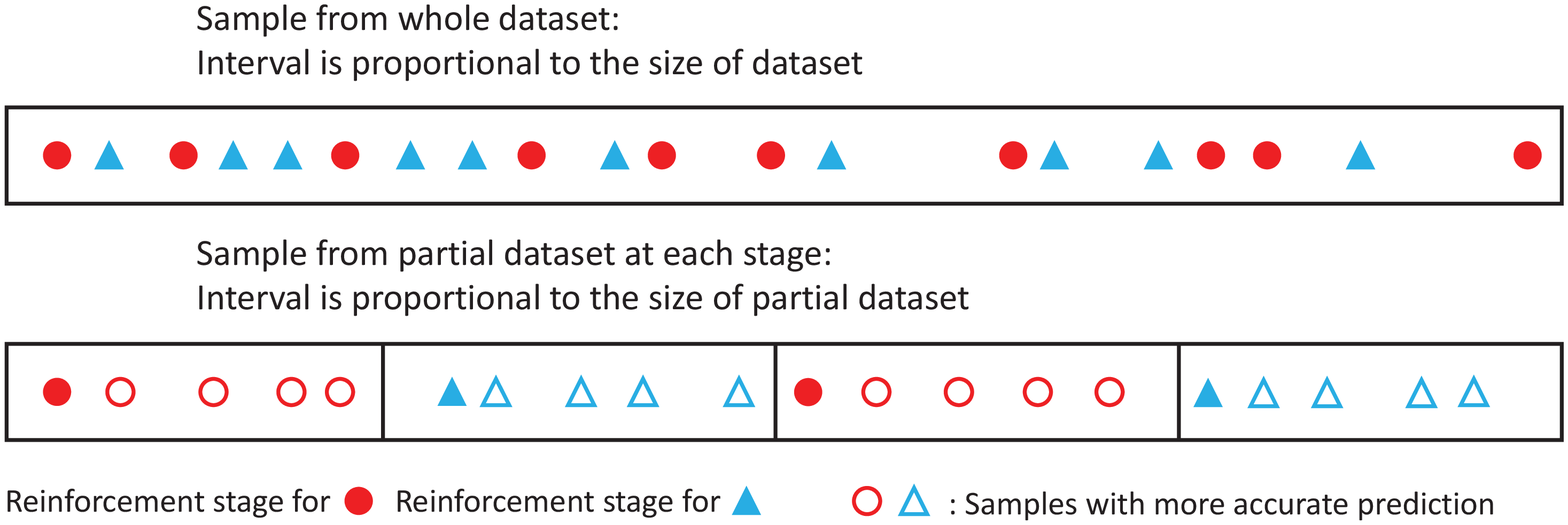}
	\caption{History Reinforcement}
	\label{fig:historyreinforcement}
\end{figure}

As discussed in Section~\ref{sec:prel}, the main goal of optimizing the
SGD algorithm is to reduce the variance of $\giw$, while keeping
the computation cost per iteration light-weight.
We have already shown how to minimize the variance of $\giw$ by using the Active Sampler.
In this subsection,
we will discuss some practical issues in implementing Active Sampler onto real systems,
which may significantly affect the computation time in each iteration.

\subsubsection{Sampling based on History}

\label{sec:history}

The probabilistic distribution described in Theorem~\ref{thm:main} can indeed
minimize the variance of the stochastic gradient.  However,
use of the exact distribution, which requires all $n$ gradients to be evaluated
 at each step, is obviously not practical.
Instead, we can predict the gradient magnitude for each training instances using historical data. 
A straightforward approach to the problem is to remember the magnitude of the latest gradient of each instance 
and use it as an approximation. 
Considering that the actual gradient may change and the historical magnitude is only an approximation, 
a smoothing term is required. 
For example, if one instance contributes a zero gradient at any iteration of the model training when it is sampled,
 that sample will never be visited afterward if there is no smoothing, 
 notwithstanding this instance may become valuable for refinement of
  the model at later stages.

\begin{Definition}[History Approximation]
	\label{thm:prediction}
	Let $t_i$ be the latest step where training instance $i$ is 
   visited and let $\wti$ denote the parameter value at step $t_i$.
The sampling probability for each sample $i$ in a practical Active Sampler with smoothing is defined as:
	\begin{eqnarray}
		\pii = (1- \beta)\frac{\normgradwtili}{\sum_i \normgradwtili} + \frac{\beta}{n}
	\end{eqnarray}
\end{Definition}

\begin{algorithm}[t]
	\caption{ASSGD (Active Sampler SGD)}
	\label{alg:assgd}
	\KwIn{Initial $\vt{w_0}$, $T$, $\beta$, $Grad[]$, $SumGrad = \sum_i Grad[i]$}
	\KwOut{Final $\vt{w_T}$ }
	\For{$t = 1, ..., T$}{
		sample $i$ from $\{1, ..., n\}$ based on distribution $\{p_1, ..., p_n\}$ where $\pii = \beta/n + (1- \beta)Grad[i]/SumGrad$\;
		$\giw = \gradw \lwxyi / n\pii$\;
		$\vt{w_t} = \vt{w_{t-1}} - \eta \gradw \regw - \eta \giw$\;
		$SumGrad = SumGrad - Grad[i]$\;
		$Grad[i] = \normgradwli$\;
		$SumGrad = SumGrad + Grad[i]$\;
	}
\end{algorithm}

The scheme ensures that every training instance 
has at least $\beta$ times the average sampling probability (i.e. $1/n$) 
being sampled. 
Algorithm~\ref{alg:assgd} describes
the Active Sampler using the history approximation. 
In each iteration, only its gradient magnitude needs to be updated.

We note that by using the history length to denote the number of iterations from the last time an instance
 is sampled, 
its history approximation becomes less accurate with the increase of the history length.
Meanwhile, 
the expectation of history length for one instance is $1/\pii$,
and the average of $\pii$ is $1/n$.
Consequently, the history approximation will become less accurate when data size becomes larger. 
To address this issue, we
 propose \ems{History Reinforcement}, whose key idea is illustrated in Figure~\ref{fig:historyreinforcement}. 
 History Reinforcement algorithm trains the model using a set of stages, 
 each of which contains a large amount of SGD iterations. 
 Within a stage, it first samples a subset of training data which consists of $m$ instances, and then uses them as the training set in its SGD iterations.
During the training of each stage, 
the sampling probability for the instances is $n/m$ times larger than training all the instances together. 
Therefore, the approximation will be much more accurate except 
the first time in a stage when one instance is sampled.
The only drawback of History Reinforcement is 
that it may lead to a bias in the training of a stage, 
as only partial data are trained.
However, \cite{minibatch} presents an effective scheme in an analogous context 
to reduce this bias by adding a regularizer to limit the change of parameter in one stage.
Below, we formally define the concept of \emb{History Reinforcement}.

\begin{Definition}[History Reinforcement]
	\label{thm:history}
	History Rein\-for\-ce\-ment algorithm consists multiple stages. 
in each stage $t$, it draws a subset $I_t$ of training instances, which contains $m$ random instances from the whole dataset, and trains the model $\vt{w_t}$ using $g$ SGD iterations. The loss function used in each stage is:
	\begin{eqnarray}
		\losswt = \sum_{i \in I_t} \frac{\lwtixyi}{m} + \frac{\gamma_t}{2} \norm{\vt{w_{t-1}} - \vt{w_t}}
	\end{eqnarray}
	where $\gamma_t$ is a parameter in~\cite{minibatch} calculated based on $m$, $t$ and $\var{\giw}$.
\end{Definition}

The correctness and effectiveness of this batch training is given in Theorem~1 of \cite{minibatch} (by considering a stage as a batch step).
The average number of visits for one instance in a stage is $g/m$. 
Therefore, $1-m/g$ proportion of the iterations in a training stage will benefit from a more accurate approximation. 
In essence, there is a trade-off between the bias involved by using partial data and the accuracy gain in gradient approximation. 
Intuitively, for larger datasets, 
the bias becomes less significant while the accuracy gain by using History Reinforcement becomes more valuable.
On the contrary,
the approximation of gradient in small datasets is fairly accurate, and
therefore, directly sampling from the whole dataset is advantageous.

\begin{algorithm}[t]
	\caption{ASSGD with History Reinforcement}
	\label{alg:ashr}
	\KwIn{Initial $\vt{w_0}$, $T$, $m$, $g$}
	\KwOut{Final $\vt{w_T}$ }
	\For{$t = 1, ..., T$}{
		$I_t = \Phi$\;
		\For{$i = 1, ..., m$}{
			sample $t_i$ uniformly from $\{1, ..., n\} - I_t$\;
			$I_t = I_t \cup \{t_i\}$\;
		}
		Compute $\gamma_t$ based on \cite{minibatch}\;
		Train $\vt{w_t}$ using Algorithm~\ref{alg:assgd} for $g$ iterations, using $\vt{w_{t-1}}$ as initial $\vt{w_0}$, using $I_t$ as the training set, and using $\regw + \frac{\gamma_t}{2} \norm{\vt{w_{t-1}} - \vt{w}}$ as the regularization function\;
	}
\end{algorithm}

\subsubsection{Efficient Vectorized Computation}

To reduce the variance of the stochastic gradient, 
a widely adopted solution is to employ mini-batch training, 
which averages the stochastic gradients of multiple training samples. 
By averaging $b$ training samples, the variance of gradient can be reduced by $b$ times ($b$ is typically
 between 10 and 1000). 
 Meanwhile, thanks to the effect of vectorized computation 
 and the constant communication cost when the computations are parallelized, 
 the training time per iteration for mini-batch SGD is much smaller 
 than $b$ times the training time per iteration for the standard SGD. 
 Therefore, mini-batch SGD is commonly used in most large-scale optimization problems. 
 Active Sampling is orthogonal to mini-batch SGD, so we could use both improvements simultaneously. 
 However, to integrate them together, we need to compute the average of $\giw$ for $b$ training samples in an efficient vectorized way, as well as to obtain the gradient magnitude for each training instance.

\begin{Definition}[Mini-batch SGD / Active Sampler]\label{def:minibatch} \quad \quad\\
	At each iteration $t$, mini-batch SGD uniformly draws $b$ samples $I_t = \{ t_1, ..., t_b\}$ from $\{ 1, ..., n\}$, and uses the averaged gradient as the stochastic gradient.
	\begin{eqnarray}
		\gtw & = & \sum_{i \in I_t} \frac{\gradw \lwxyi}{b}
	\end{eqnarray}
	At each iteration $t$, mini-batch Active Sampler repeats the sample selection in Theorem~\ref{thm:main} for $b$ times and get $b$ samples $I_t = \{ t_1, ..., t_b\}$, and uses the averaged gradient as the stochastic gradient.
	\begin{eqnarray}
		\gtw & = & \sum_{i \in I_t} \frac{\gradw \lwxyi}{b n \pii}
	\end{eqnarray}
	Similar to mini-batch SGD, the variance of $\gtw$ in mini-batch Active Sampler is reduced by $b$ times.
\end{Definition}

In mini-batch SGD, 
the main advantage stemmed from vectorized computation is that the actual gradients 
from all samples do not need to be stored individually and then aggregated. 
This trick is very time and memory efficient when the size of parameters is huge (e.g. deep neural network, sparse logistic regression). 
Here we use a multi-layer perceptron~\cite{systemdl2} (MLP) model to illustrate how the stochastic gradient of mini-batch SGD is computed, and how mini-batch Active Sampler can be computed in a similar light-weight manner.
Note that general linear models ($\fwx = \vt{w^T}\vt{x}$) are usually generalized as a multi-class classification problem, and their parameters $\vt{w}$ are also a matrix, 
which is similar to the hidden layer in MLP.
Therefore, general linear models can be viewed as a single layer perceptron with a small difference 
in the loss function and hence all the optimization techniques discussed below can be applied to these models as well.

\begin{Definition}[Multi-Layer Perceptron (MLP)]\label{def:mlp}
	Multi-Layer Perceptron~\cite{systemdl2} is a feed forward neural network.
	It consists of one input layer $H^{(0)}$, $h$ hidden layers ($H^{(k)}$, $k = 1 ..., h$ ) and a loss layer to compute the loss $\lwxyi$ based on the prediction $H^{(h)}$ for $\vt{x_i}$. 	
	Each hidden layer $k$ is a vector of units, and the calculation is formalized as follows:
	\begin{eqnarray}
		Z^{(k+1)} = W^{(k)}H^{(k)} + B^{(k)}\\
		H^{(k+1)}= \sigma(Z^{(k+1)})
	\end{eqnarray}
	where $\sigma(\cdot)$ is the activation function.
	The gradient is computed via back-propagation:
	\begin{eqnarray}
		\frac{\partial \lwxyi}{\partial W^{(k)}} = \frac{\partial \lwxyi}{\partial Z^{(k+1)}_i} \times {H^{(k)}_i}^T \\
		\frac{\partial \lwxyi}{\partial H^{(k)}} = \frac{\partial \lwxyi}{\partial Z^{(k+1)}_i} \times W^{(k)} \\
		\frac{\partial \lwxyi}{\partial Z^{(k)}_{p}} = \sigma'(Z^{(k)}_{p}) \frac{\partial \lwxyi}{\partial H^{(k)}_{p}}
	\end{eqnarray}
\end{Definition}

\begin{algorithm}[t]
	\caption{Batch Computation for Active Sampler}
	\label{alg:vectorizedminibatch}
	\KwIn{$H^{(k)}_{b \times l}$, $Z^{(k+1)}_{b \times m}$, $W^{(k)}_{m \times l}$, $\nabla H^{(k+1)}_{b \times m}$}
	\KwOut{$\nabla H^{(k)}_{b \times l}$, $\nabla W^{(k)}_{m \times l}$, $\uninorm{\grad{W^{(k)}} \lwxyi}$}
	\ForEach{$i \in \{0, ..., b-1\}$}{
		\ForEach{$p \in \{0, ..., m-1\}$}{
			$\nabla Z^{(k+1)}[i][p] = \sigma'(Z^{(k+1)}[i][p]) \nabla H^{(k+1)}[i][p]$\;
		}
	}
	$\nabla H^{(k)}_{b \times l} = \nabla Z^{(k+1)}_{b \times m} \times W^{(k)}_{m \times l}$\;
	$\nabla W^{(k)}_{m \times l} = \frac{1}{b} {(\nabla Z^{(k+1)})}^T_{m \times b} \times H^{(k)}_{b \times l}$\;
	// line 1-5 : compute stochastic gradient  $O(bml)$\\
	\ForEach{$i \in \{0, ..., b-1\}$}{
		$SumGZ = 0$, $SumH = 0$\;
		\ForEach{$p \in \{0, ..., m-1\}$}{
			$SumGZ = SumGZ + {(\nabla Z^{(k+1)}[i][p])}^2$\;
		}
		\ForEach{$q \in \{0, ..., l-1\}$}{
			$SumH = SumH + {(H^{(k)}[i][q])}^2$\;
		}
		$\uninorm{\grad{W^{(k)}} \lwxyi} = \sqrt{SumGZ * SumH}$
	}
	// line 7-13 : compute gradient magnitude $O(b(m+l))$\\
\end{algorithm}

We now analyze the computation of gradient for one layer $k$ in mini-batch SGD.
Using $m$ to denote the number of units in $H^{(k+1)}$, and $l$ to denote the number of units in $H^{(k)}$,
the parameter $W^{(k)}$ is an $m \times l$ matrix, and $\gt{W^{(k)}}$ is also an $m \times l$ matrix.
\begin{eqnarray}
	\nonumber {\gt{W^{(k)}}} & = & \sum_{i \in I_t} \frac{\grad{W^{(k)}} \lwxyi}{b}    \\
	& = & \frac{1}{b} \sum_{i \in I_t} \frac{\partial \lwxyi}{\partial Z^{(k+1)}_i} \times {H^{(k)}_i}^T 
\end{eqnarray}
However, directly computing the $b$ gradients $\frac{\partial \lwxyi}{\partial Z^{(k+1)}_i} \times {H^{(k)}_i}^T$ one by one is not efficient, as each gradient is an $m \times l$ matrix.
Instead, using $H^{(k)}$ to denote the $b \times l$ lower layer feature matrix $[H^{(k)}_1, ..., H^{(k)}_b]^T$, and using $\frac{\partial \lwxy}{\partial Z^{(k+1)}}$ to denote the $b \times m$ higher layer gradient matrix $[\frac{\partial \lwxyi}{\partial Z^{(k+1)}_1}$, $...$, $\frac{\partial \lwxyi}{\partial Z^{(k+1)}_b}]^T$, we have:
\begin{eqnarray}
	\nonumber {\gt{W^{(k)}_{pq}}} & = & \frac{1}{b} \sum_{i \in I_t} \frac{\partial \lwxyi}{\partial Z^{(k+1)}_{ip}} \times H^{(k)}_{iq} \\
	\Rightarrow \quad \quad \quad {\gt{W^{(k)}}} & = & \frac{1}{b} {{(\frac{\partial \lwxy}{\partial Z^{(k+1)}})}^T} \times {H^{(k)}}
\end{eqnarray}
Therefore, it is computed by performing
matrix multiplication for an $m \times b$ matrix and a $b \times l$ matrix, which is obviously
more efficient than the previous method,
which computes multiple vector-vector multiplications.
It also reduces the memory cost from $b \times m \times l$ to $m \times l$. 
The computation for $\frac{\partial \lwxyi}{\partial H^{(k)}}$ is analogous. 

In mini-batch Active Sampler, 
there are two differences compared to mini-batch SGD. 
First, Active Sampler needs to provide
 each instance a weight based on $1/n \pii$. 
 Second, Active Sampler needs to compute the gradient magnitude for every training instance. 
 For the first problem, the solution is quite straightforward -- putting the weight in the loss function before calculating its gradient for the parameters. 
 By scaling the value of loss by $1/n \pii$ times, its gradient will change $1/n \pii$ times accordingly.
 
\begin{table}[t!]\caption{Datasets and Models}
	\label{tbl:dataset}
	\vspace{-0.3cm}
	\center{
		\scriptsize
		\begin{tabular}{|c|c|c|c|c|}
			\hline
			Dataset & \# Examples & Size & Model & Test Error\\
			\hline\hline
			MNIST & 60K & 57MB & kernel SVM & 0.6\% \\
			\hline
			URL & 2.4M & 950MB & Lasso & 2.5\% \\
			\hline
			CIFAR10 & 60K & 161MB & DCNN &  18\% \\
			\hline
			CIFAR-DA & 7.6M & 14.8GB & DCNN & 11.5\% \\
			\hline
		\end{tabular}
	}
	\vspace{-0.25cm}
\end{table}

For the calculation of the gradient magnitude, 
we exploit the following equation to avoid explicitly calculating the gradient of each training instance:
\begin{eqnarray}
	\nonumber & \norm{\grad{W^{(k)}} \lwxyi} \\
	\nonumber = & \sum_{p \in m} \sum_{q \in l} \sqr{\frac{\partial \lwxyi}{\partial W^{(k)}_{pq}}} \\
	\nonumber = & \sum_{p \in m} \sum_{q \in l} \sqr{\frac{\partial \lwxyi}{\partial Z^{(k+1)}_{ip}} H^{(k)}_{iq}}\\
	= & (\sum_{p \in m} {\frac{\partial \lwxyi}{\partial Z^{(k+1)}_{ip}}}^2)(\sum_{q \in l} {H^{(k)}_{iq}}^2)
\end{eqnarray}
Therefore, we just need to compute the product of the square sum of $\frac{\partial \lwxyi}{\partial Z^{(k+1)}_i}$ and $H^{(k)}_i$, which are all from intermediate results during the computation of mini-batch SGD. 
Its computation complexity is just $O(b(m+l))$, which is extremely light-weight considering that the cost for calculating the gradient is $O(bml)$.
Algorithm~\ref{alg:vectorizedminibatch} shows the vectorized computation of Active Sampler in each layer of MLP. For deep models which contains multiple layers, the square of the gradient magnitude with respect to parameters from whole layers can be computed by summing the square of gradient magnitude with respect to parameters from each layer, i.e.
\begin{eqnarray}
	\norm{\gradw \lwxyi} & = & \sum_k \norm{\grad{W^{(k)}} \lwxyi} \quad \quad
\end{eqnarray}

\section{Experimental Study}
\label{sec:exp}

\subsection{Experiment Setup}

\begin{figure*}
	\centering
	\subfloat[MNIST]{
		\epsfig{width=.24\textwidth,file=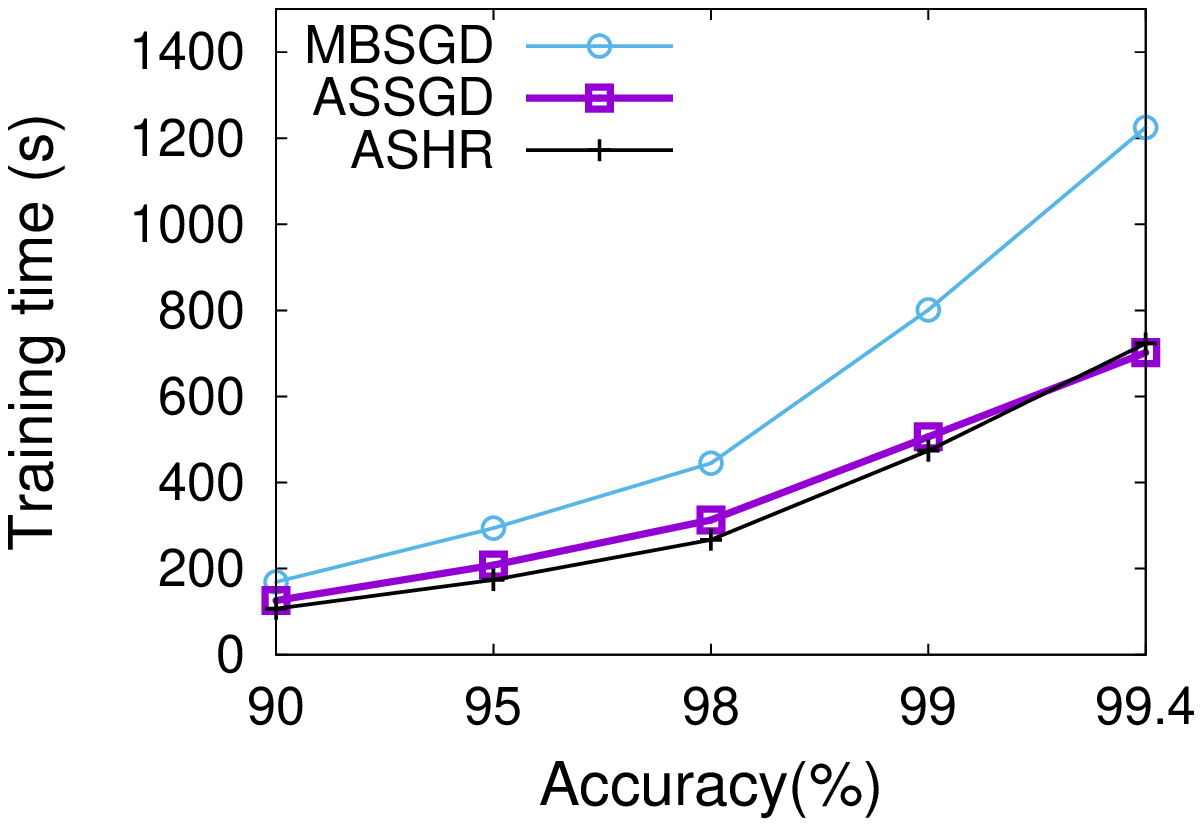}}
	\hspace{0mm}\subfloat[URL]{
		\epsfig{width=.24\textwidth,file=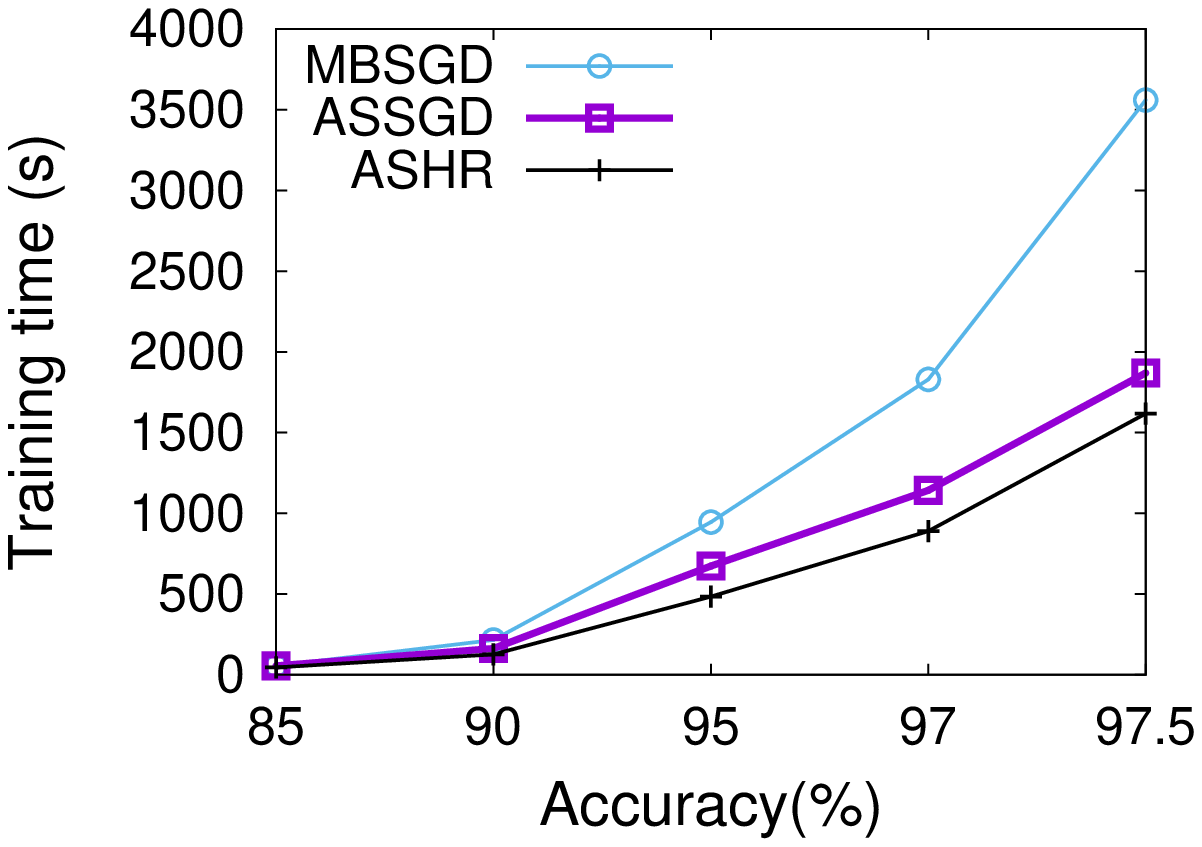}}
	\hspace{0mm}\subfloat[CIFAR10]{
		\epsfig{width=.24\textwidth,file=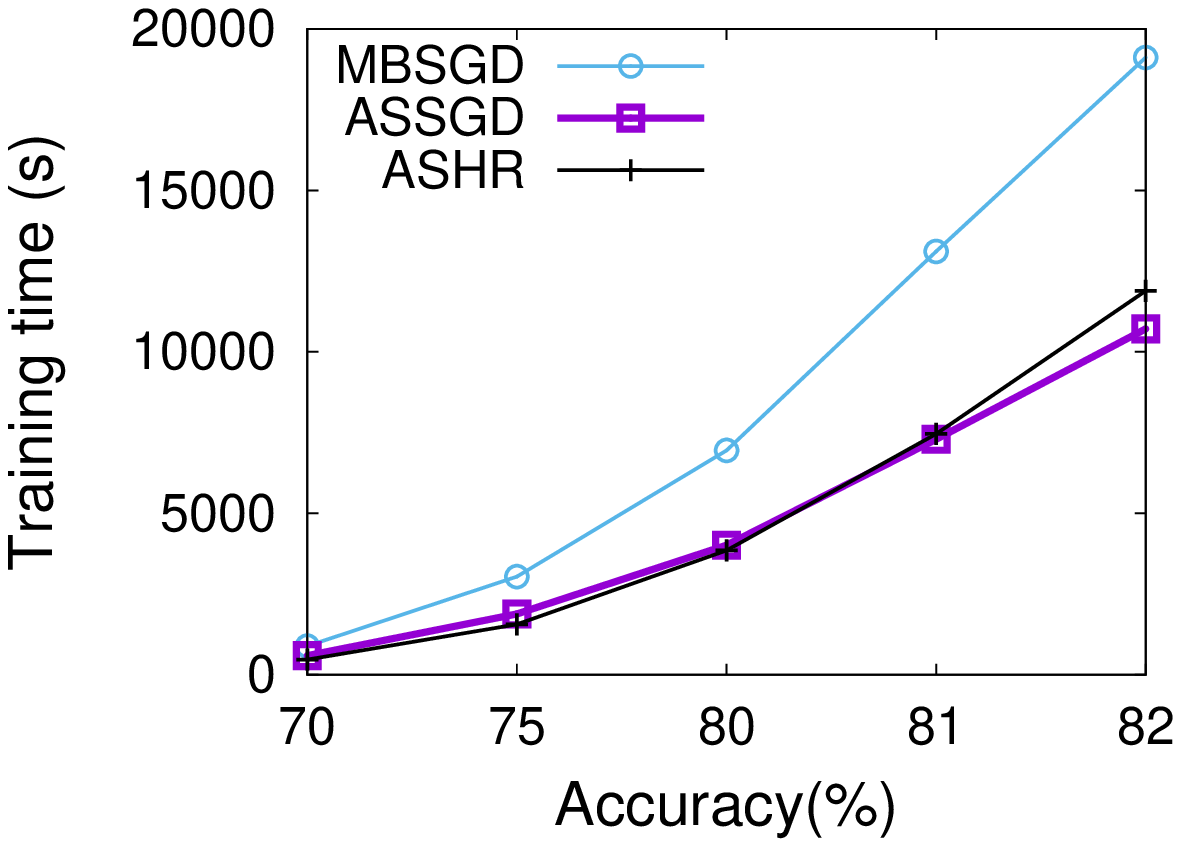}}
	\hspace{0mm}\subfloat[CIFAR-DA (Distributed)]{
		\epsfig{width=.24\textwidth,file=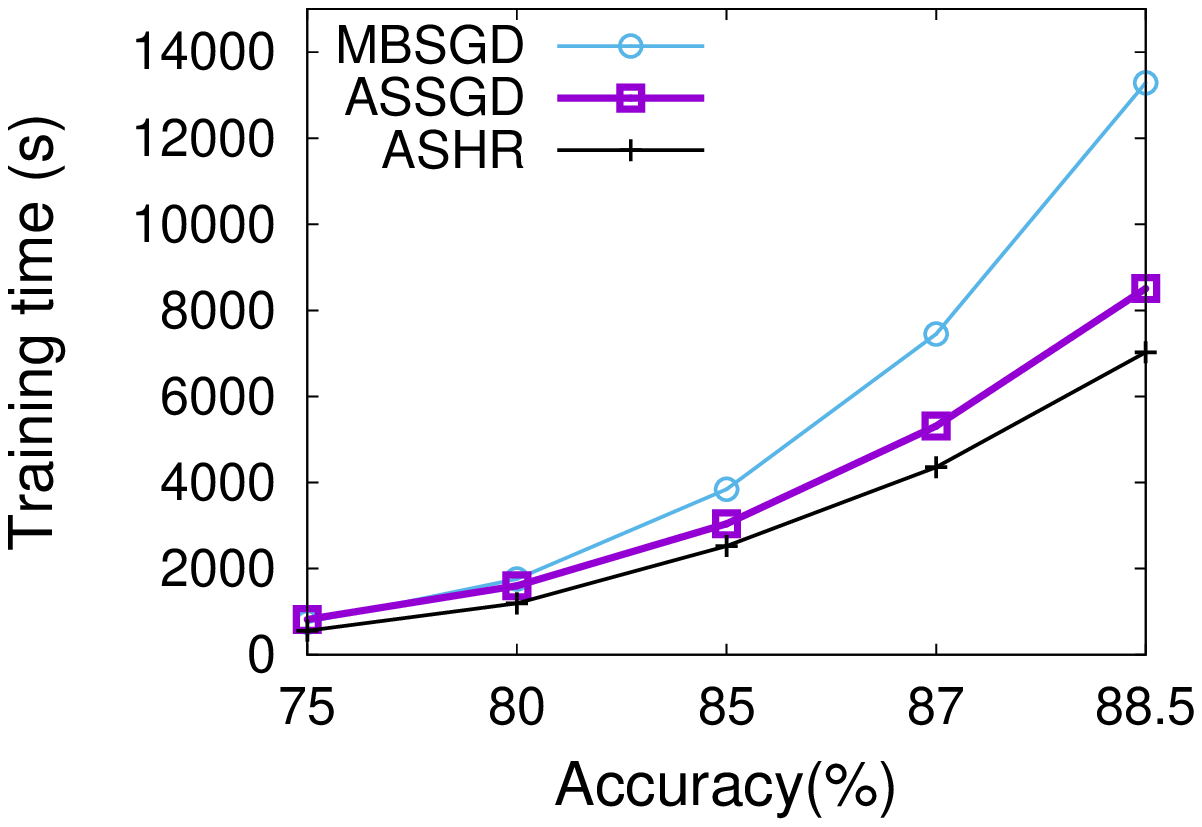}}
	\caption{Overall Training Time}
	\label{fig:overall}
	\vspace{-0.20cm}
\end{figure*}

We shall evaluate the speedup of Active Sampler using
 a set of popular benchmark tasks,
 namely the hand-written digit classification on MNIST using SVM~\cite{pegasos},
 malicious URL detection in URL using feature selection~\cite{largelasso},
  and image classification on CIFAR10 using CNN~\cite{cnn}.  
In addition, 
we shall also test the scalability of Active Sampler using the CIFAR10 dataset with data augmentation, 
  where the size of the training data is increased by 128x.  
Table~\ref{tbl:dataset} summarizes the datasets and models used in our experimental study.

\squishlist

\item \emb{MNIST:} MNIST is a benchmark dataset of handwritten digits classification, 
consisting of a training set of 60000 images and a test set of 10000 images. 
Each image contains 28*28 gray pixels. Pegasos~\cite{pegasos} is a mini-batch SGD solver for kernel SVM model. The test error of kernel SVM in MNIST dataset is 0.60\%~\cite{pegasos}.

\item \emb{URL:} URL~\cite{url} is a dataset for malicious URL detection. 
It consists of 2.4 million URLs and 3.2 million features. Each URL contains around 100 non-zero features and hence its features are quite sparse. Lasso regression~\cite{largelasso} is a popular feature selection model as described in Table~\ref{tbl:erm}. 
Its test error in the URL dataset is around 2.5\%.

\item \emb{CIFAR10:} CIFAR10 is a dataset for image classification, consisting of a training set of
60000 images and a test set of 10000 images. 
It is the benchmark dataset commonly used
for the evaluation of deep convolutional neural network (DCNN)~\cite{cnn} models. 
Each image contains 32*32 colored pixels. 
Its test error without data augmentation is 18\%.

\item \emb{CIFAR-DA:} Data augmentation is a standard technique to increase the size of training data.  
It generates additional images by slightly translating the original images.
We use the data augmentation version of the CIFAR10 dataset (CIFAR-DA) to study the scalability of Active Sampler. It contains 128x images compared with CIFAR10. Its test error for DCNN model is 11.5\%. However, as the number of training images increases, DCNN model takes significantly longer time to achieve its best performance.

\squishend

All the models are trained under the SGD framework. The standard mini-batch SGD (\emb{MBSGD}) algorithm is used as the baseline. 
We also implement the mini-batch version of Active Sampler for comparison, 
with the same size of mini-batch. 
Unless otherwise specified, the size of mini-batch is set to 128. 
The validation accuracy are tested per 100 mini-batch iterations. To study the effect of History Reinforcement strategy described in Section~\ref{sec:history}, we have implemented two versions of Active Sampler,
with and without History Reinforcement.
\emb{ASSGD}, the Active Sampler without History Reinforcement,
is expected to to perform well for moderate size of training examples,
 while \emb{ASHR}, the Active Sampler with History Reinforcement,
  is expected to yield better performance for large-scale training sets.
In ASHR, the whole dataset is randomly split 
into 16 large batches, and examples are trained 16 times on average at each stage.

All of the algorithms are implemented in C++, compiled using GCC O2, 
and OpenBlas is adopted to accelerate linear algebra operations. 
Experiments for MNIST, URL and CIFAR10 are carried on an Intel Xeon 24-core server with 500GB memory.

\emb{Distributed Environment and Scalability Test:}\\
We also study the performance of Active Sampler in a
distributed environment with the scale of millions of training examples. 
In general, 
the benefit of Active Sampler is independent
to the architecture of the training system 
as long as it is still under the SGD framework.
SGD training algorithms can be easily distributed to clusters via the parameter server~\cite{paraserver} architecture. 
The main difference between a distributed SGD and a single-node SGD is 
that the distributed SGD incurs additional communication overhead.
Since the communication cost for a mini-batch is a constant 
while the computation cost is proportional to its size, 
distributed SGD usually uses a larger mini-batch to reduce the proportion of communication cost.
We conduct our scalability study using CIFAR-DA dataset on the Apache SINGA system~\cite{singa},
which is a general distributed deep learning platform.
We follow all the default settings of CIFAR on SINGA, where the mini-batch size is set to be 512. 
The distributed environment is a 32-node cluster, where each machine is equipped with an Intel Xeon 4-core CPU and 8GB memory.

\subsection{Overall Performance}

\eat{Figure~\ref{fig:least-most} shows the
training images in MNIST which are least and most sampled. 
Those least sampled images are obviously
 easier to classify than those most sampled images, for both machines or humans. 
The results demonstrate that our active learning based intuition is 
well carried out in the practical implementation -- 
examples that are ``hard to classify'' are indeed sampled more frequently. 
Next, we directly measure how much Active Sampler can speedup the training process.}

Figure~\ref{fig:overall} shows the training time to 
reach a certain accuracy for MBSGD, ASSGD and ASHR.
99.4\%, 82\%, 97.5\% and 88.5\% are the best accuracy achieved in these four tasks respectively.
Generally, the convergence of ASSGD and ASHR are significantly faster than MBSGD. 
To reach the optimal test error, 
ASSGD and ASHR save about 40\% to 60\% of the training time. 
The speedup is especially great in the latter stages of training, as evidenced by the bigger difference in the slope between the algorithms in the right hand portion of each graph.
A possible explanation to this phenomenon is that the models typically
have smaller changes near the end stage of training. 
As a result, 
the larger variance in MBSGD would have more serious negative effect, 
while ASSGD and ASHR algorithm would get even better approximation of the scale of gradient. 
There are also some notable differences
between the performance of ASSGD and ASHR. 
First, ASHR converges much faster than ASSGD in the two large datasets (URL and CIFAR-DA), 
demonstrating that ASHR provides more accurate approximation of the
scale of gradient in large-scale datasets. 
Meanwhile, its speed-up is less than ASSGD in the two small datasets (MNIST and CIFAR10), 
probably due to only a subset of training data
used by ASHR which only contains around 3000 training examples. 
Second, ASHR converges slightly faster at the beginning,
and slightly slower near the end. 
This is because ASSGD needs to visit the whole dataset at least once before 
enjoying the benefit of smaller variance, 
while ASHR only needs to visit a subset of the dataset. 
However, in later stages of training, ASSGD gets the gradient approximation as accurate as ASHR since the model change is not significant, while ASHR is still suffering from the bias introduced by sampling from partial data.

For the scalability test on CIFAR-DA, 
its training time is even smaller than CIFAR10 due to distributed training. 
Active Sampler still works as expected: ASHR speeds up the training process by 1.9x, and ASSGD speeds up the process by 1.6x. 
This is because the benefit of Active Sampler is derived
from the total number of iterations used to achieve a certain accuracy,
instead of the improvement of training time per iteration.
Since the total number of iterations used is not affected by the training architecture, 
the speed-up 
of Active Sampler is applicable to all kinds of SGD frameworks as long as its overhead 
in the training time per iteration is small. ,
Conversely,
the number of training examples does affect the performance of ASSGD, 
since its approximation becomes less accurate when the number of training examples increases.
However, ASHR scales well in all cases.

\subsection{Variance of Stochastic Gradients}

\begin{figure*}
	\centering
	\subfloat[MNIST]{
		\epsfig{width=.24\textwidth,file=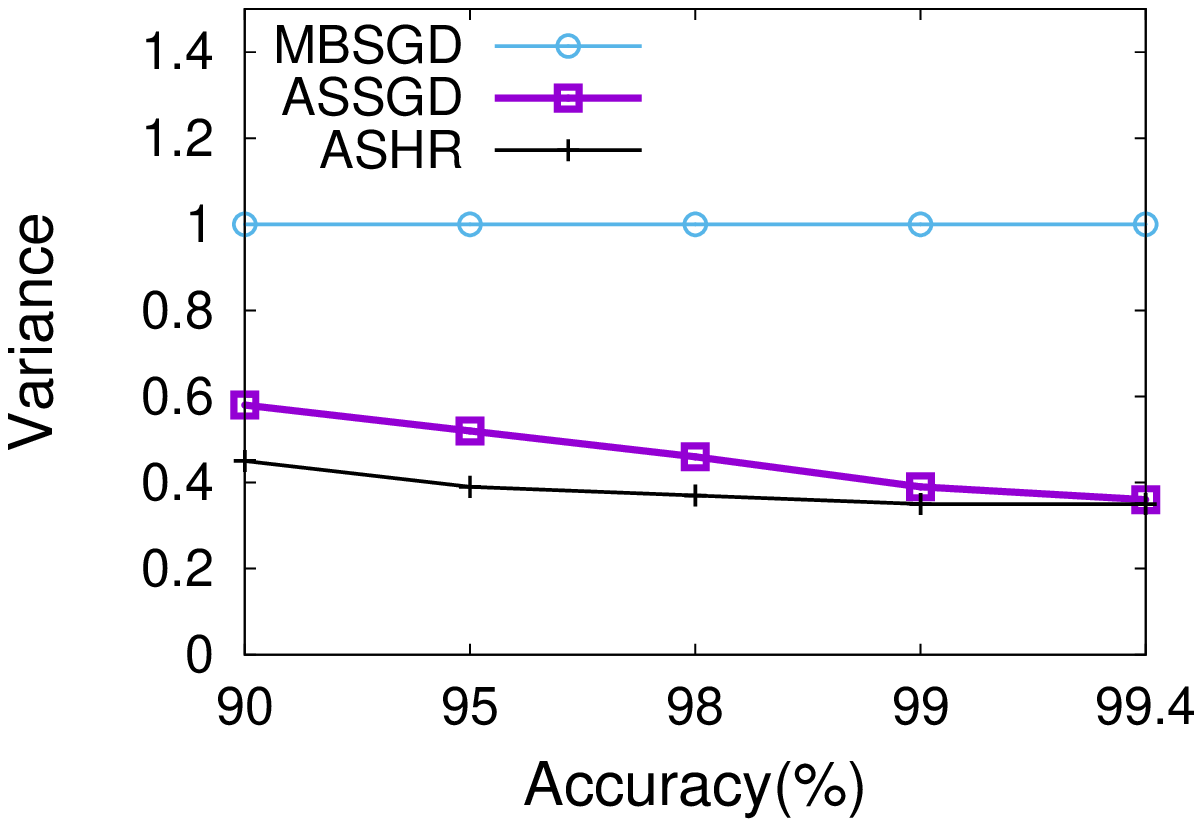}}
	\hspace{0mm}\subfloat[URL]{
		\epsfig{width=.24\textwidth,file=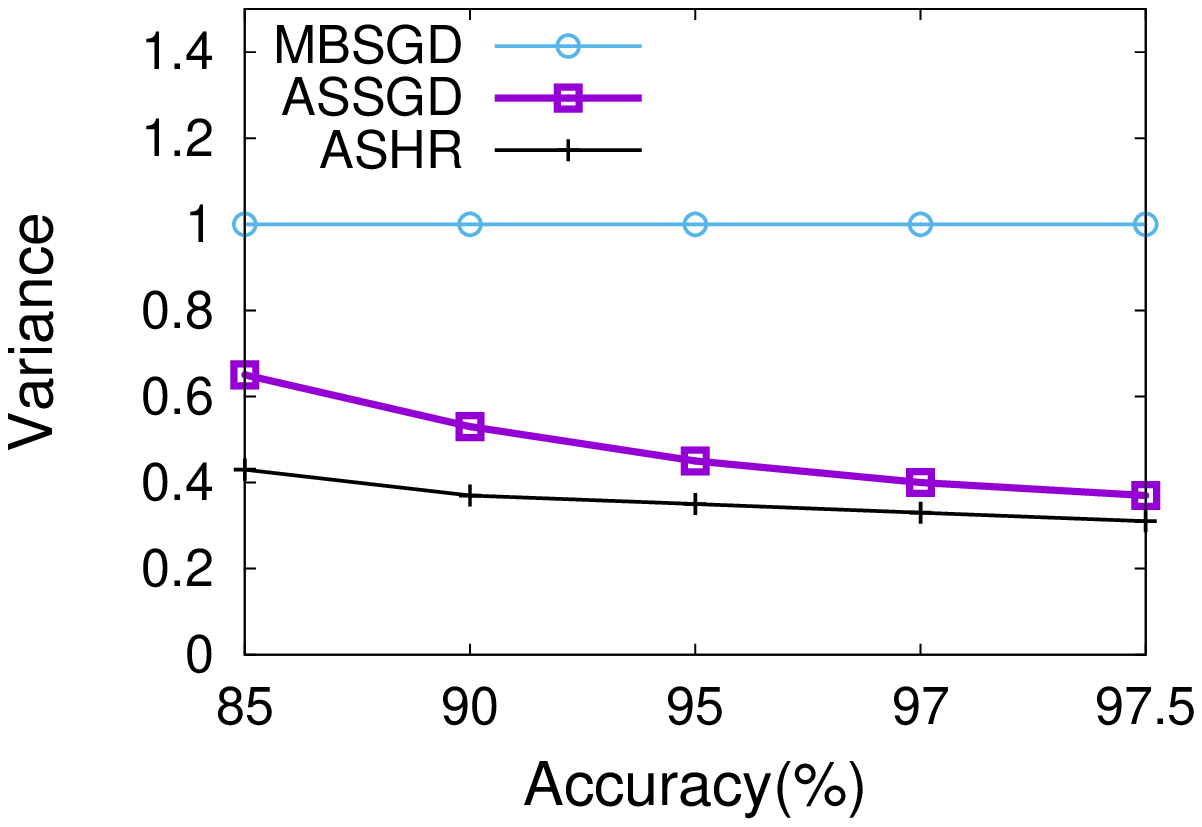}}
	\hspace{0mm}\subfloat[CIFAR10]{
		\epsfig{width=.24\textwidth,file=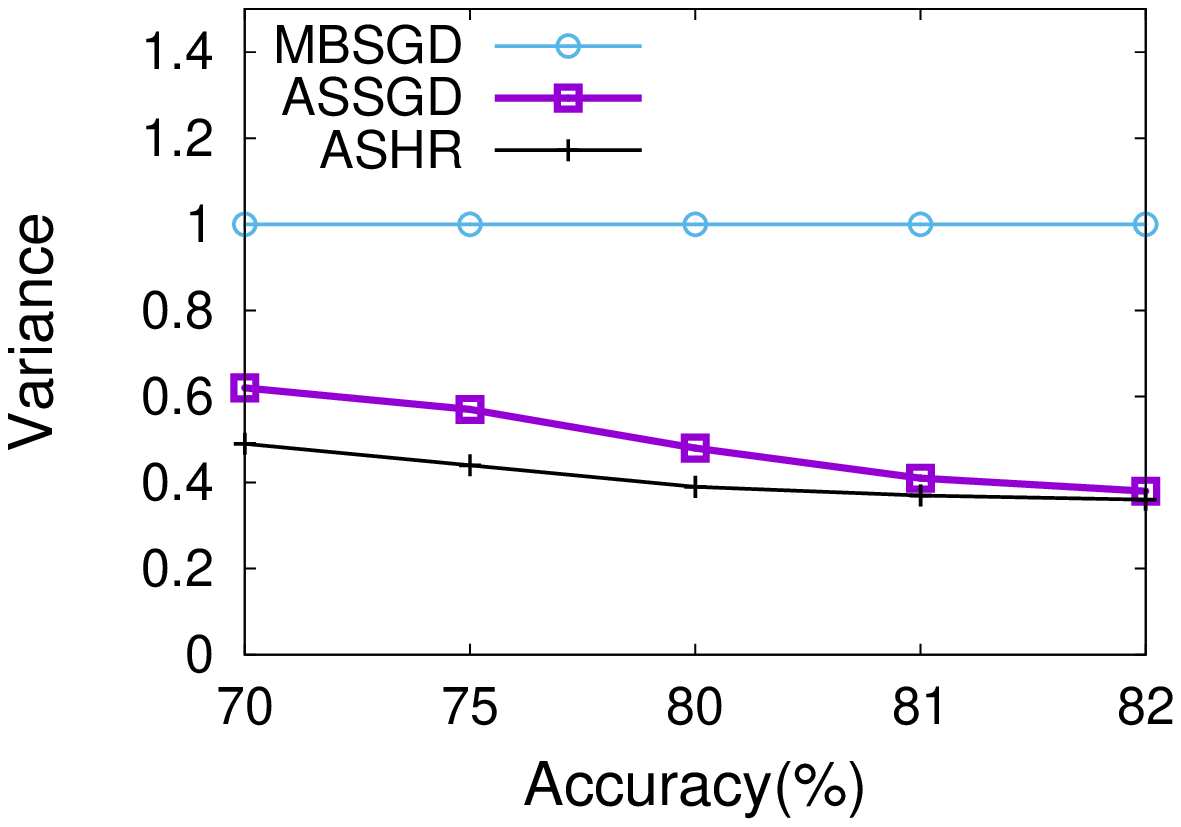}}
	\hspace{0mm}\subfloat[CIFAR-DA (Distributed)]{
		\epsfig{width=.24\textwidth,file=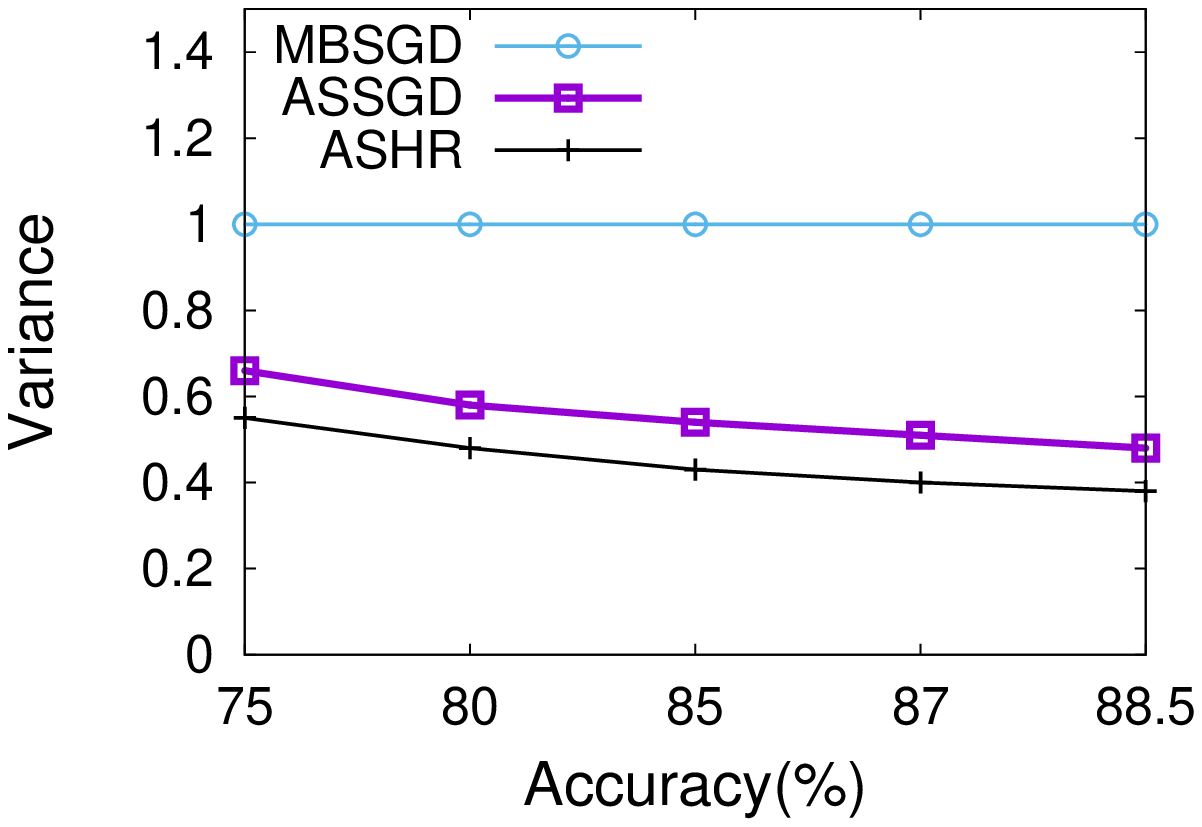}}
	\caption{Variance at Different Training Stages}
	\label{fig:variance}
	\vspace{-0.30cm}
\end{figure*}

From the stochastic optimization view point, 
the benefit of using Active Sampler 
is derived
mainly from the reduction of the variance of stochastic gradient. 
We therefore evaluate the average variance of MBSGD, ASSGD and ASHR 
and summarize the results in Figure~\ref{fig:variance}.
Since the absolute value of variance may change dramatically during the training process, 
we use the variance of MBSGD as the baseline 
and report the relative ratio of the variance of ASSGD and ASHR compared with the baseline. 
The results show that ASHR has the smallest variance, less than 40\% of the variance of MBSGD on average. 
The variance of ASSGD is slightly higher than that of ASHR, 
especially in the two large datasets, URL and CIFAR10, 
mainly because that its gradient approximation is less accurate in larger datasets. 
However, the variance of ASSGD is still less than half of the variance of MBSGD 
with the same min-batch size.
Another observable trend 
is that the variance ratio of ASSGD and ASHR are getting smaller with the increase of training time, 
suggesting that the history gradient approximation is getting more and 
more accurate as the training time increases.

We note that in MBSGD, 
the variance is proportional to the reverse of the size of mini-batch. 
Therefore, to get the same level of variance in the stochastic gradient as Active Sampler, 
MBSGD needs to increase its min-batch size by 2-3x. 
From this angle, 
Active Sampler is a much more efficient method
 to reduce the variance of stochastic gradient, instead of relying on the use of
 a larger mini-batch.

\subsection{Training Time Analysis}


\begin{table}[t!]\caption{Training Time per Iteration}
	\label{tbl:time-per-iteration}
	\vspace{-0.3cm}
	\center{
		\begin{tabular}{|c|c|c|c|}
			\hline
			Dataset & MBSGD & ASSGD & ASHR\\
			\hline\hline
			MNIST & 0.179s & 0.208s & 0.205s \\
			\hline
			URL & 0.080s & 0.092s & 0.092s  \\
			\hline
			CIFAR10 & 0.245s & 0.295s & 0.296s  \\
			\hline
			CIFAR-DA & 0.110s & 0.119s & 0.119s  \\
			\hline
		\end{tabular}
	}
	\vspace{-0.2cm}
\end{table}



The overall training time is determined
by the product of the training time per iteration and the number of iterations to reach a certain accuracy. 
Here we present
a detailed study of how Active Sampler affects these two aspects.

Table~\ref{tbl:time-per-iteration} shows the training time 
per iteration of MBSGD, ASSGD and ASHR. 
Obviously, MBSGD is the fastest since ASSGD and ASHR entail additional computations. 
However, the difference is not significant. 
ASSGD and ASHR only require 15\%-20\% more time than MBSGD, 
while providing the stochastic gradient with much smaller variance.
As discussed above, 
to reach the same variance, 
MBSGD needs to use 2-3x samples in a mini-batch, which may incur 100\%-200\% additional overhead. 
In the distributed trained task CIFAR-DA, 
the overhead of ASSGD and ASHR are even smaller, 
which are around 10\%. 
This is because ASSGD and ASHR do not incur any overhead to the communication costs. 
There are no major differences between ASSGD and ASHR, 
since they have exactly the same computation logics
 inside each iteration.

\begin{figure*}
	\centering
	\subfloat[MNIST]{
		\epsfig{width=.24\textwidth,file=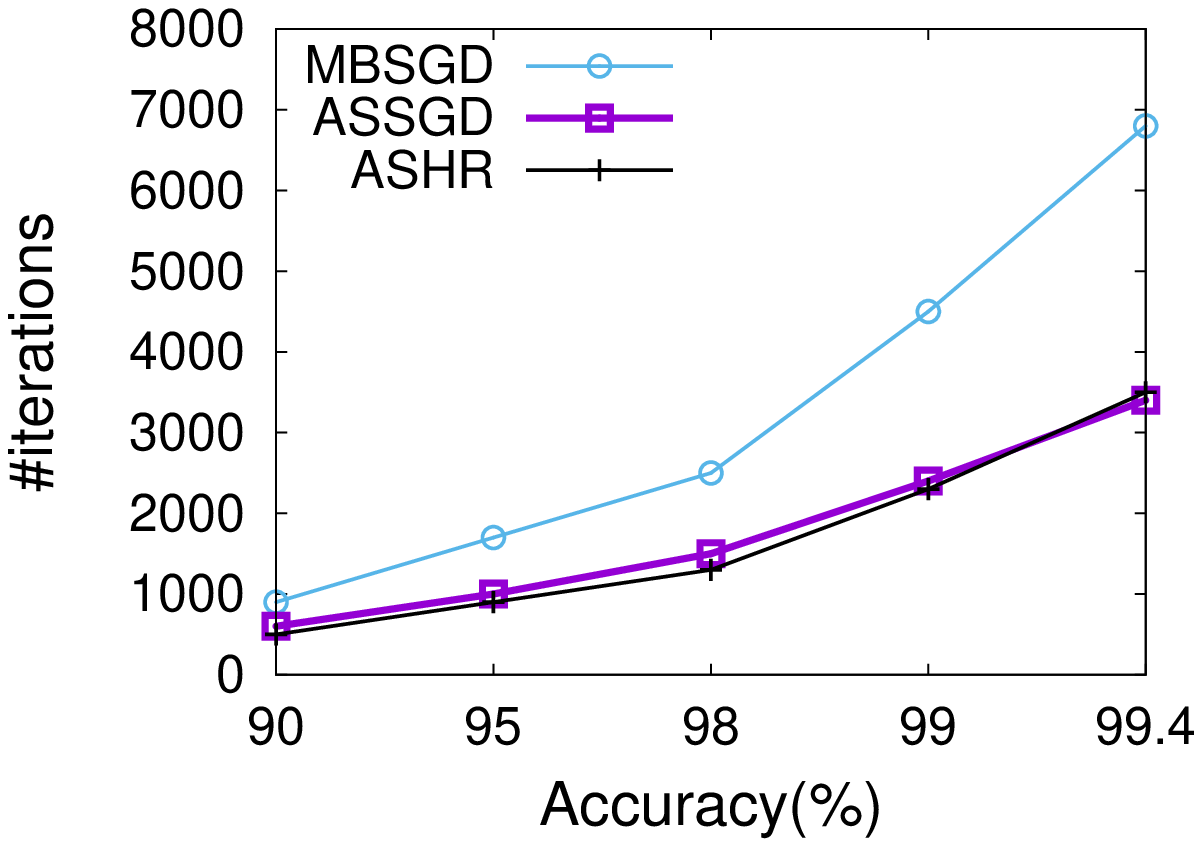}}
	\hspace{0mm}\subfloat[URL]{
		\epsfig{width=.24\textwidth,file=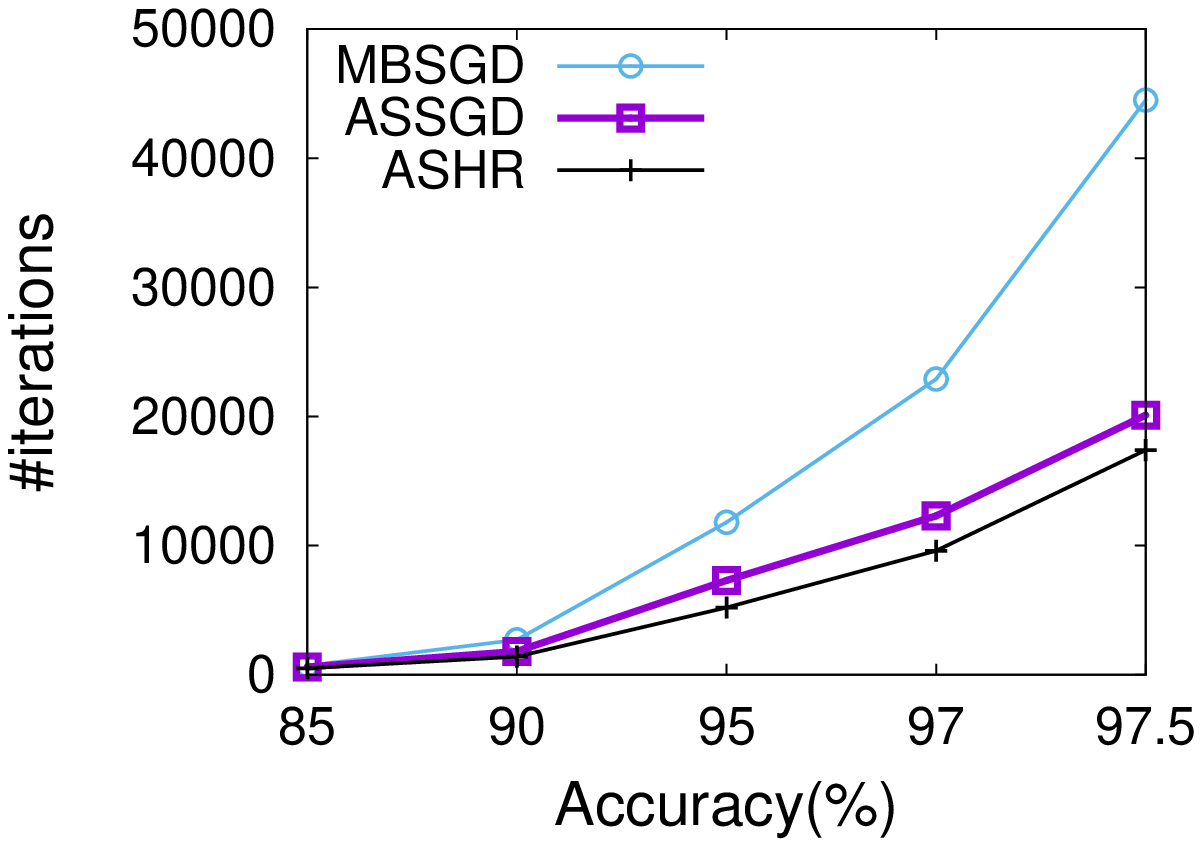}}
	\hspace{0mm}\subfloat[CIFAR10]{
		\epsfig{width=.24\textwidth,file=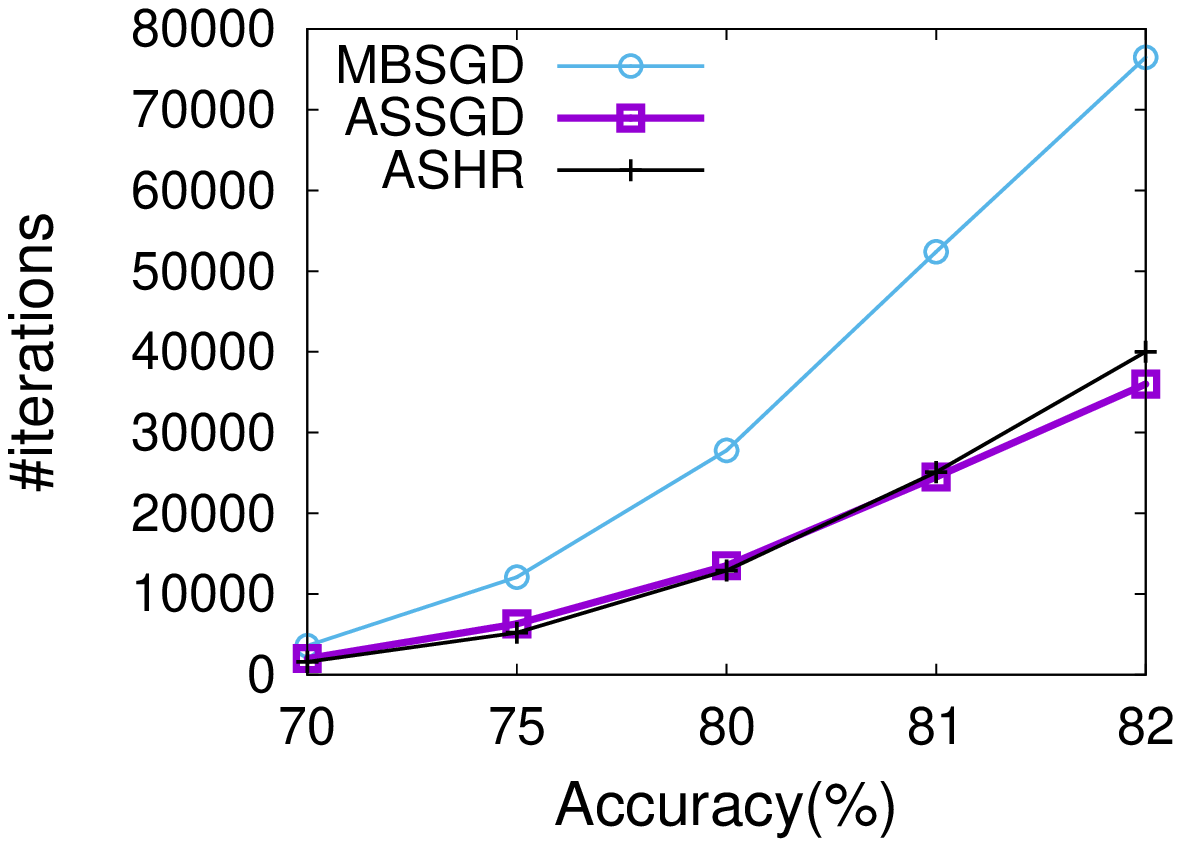}}
	\hspace{0mm}\subfloat[CIFAR-DA (Distributed)]{
		\epsfig{width=.24\textwidth,file=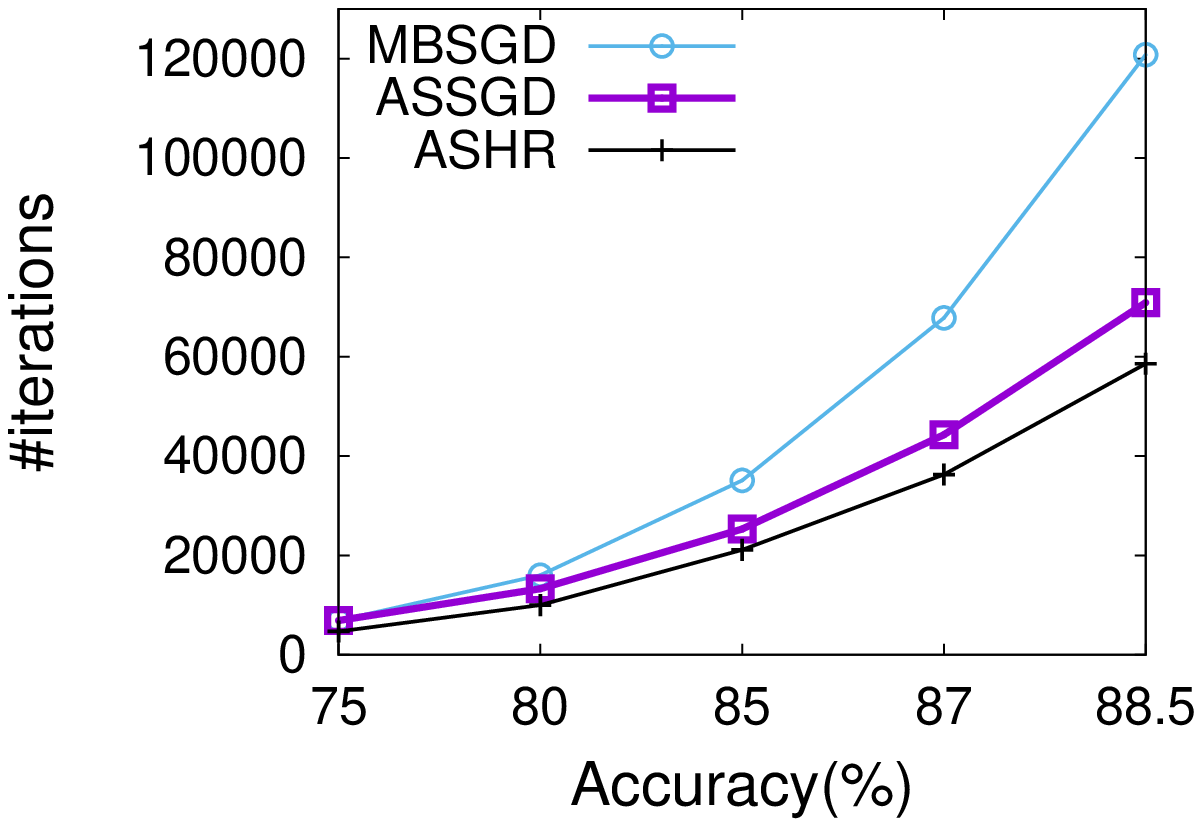}}
	\caption{Number of Iterations with respect to Accuracy}
	\label{fig:iteration-accuracy}
	\vspace{-0.15cm}
\end{figure*}

Figure~\ref{fig:iteration-accuracy} shows the number of iterations to reach a certain accuracy. 
The number of iterations required by ASSGD and ASHR is around 40\% to 60\% of the number of iterations required by MBSGD. The proportion of iterations saved varies with different datasets. 
The iterations saved would be more significant when the contribution from training examples are highly biased. However, theoretically, it is possible that all examples have the similar effect on refining the model (an extreme case is all examples being the same), and uniform sampling becomes the optimal weighted sampling. Not surprisingly, as indicated by the experiments so far, all of the benchmark datasets do not represent the extreme case, and a significant number of training iterations can be saved.

\section{Related Work}
\label{sec:related}

Complex machine learning models, such as large-scale linear methods\cite{pegasos}, 
feature selection~\cite{largelasso} or deep learning~\cite{systemdl2}, 
are widely adopted in Big Data analytics. 
Due to the huge size of both model and data, 
how to train these model efficiently is a challenging topic, 
and the solution requires efforts from learning, database, and system communities.
Many optimizations have been proposed from a systems perspective
for specific classes of models~\cite{systemfs,systemgs,systemglm, systemdl1, systemdl2, systemmd}. 
Most of these algorithms (and many others) can fit into an Empirical Risk Minimization~\cite{statisticalml} (ERM) framework, 
for which we aim to develop a more general accelerator.

The optimization of the general ERM is widely studied in machine learning community~\cite{statisticalml}. 
Generally, there are two classes of methods: 
first-order algorithms such as gradient descent~\cite{gdsurvey}, and second-order algorithms such as Newton method~\cite{newton1}. 
Although second-order algorithms typically
have a much faster convergence rate, 
they require the Hessian matrix~\cite{boydcvx} of parameters, 
making them not practical
 for large-scale models where the number of parameter is huge. 
For similar reasons, 
batch gradient methods~\cite{batchgd} are very expensive for
large training datasets. 
Therefore, stochastic methods~\cite{sgd} are the most favored algorithm in
recent large-scale machine learning applications.
\eat{There are also some works focusing on giving effective algorithms for second-order optimization~\cite{hessianfree} or batch gradient methods~\cite{expo}. However, those algorithms have too many constraints about the optimization problem and cannot be applied to most general applications.}

Stochastic Gradient Descent~\cite{sgd} (SGD) is one of the most popular stochastic optimization methods. 
Theoretical results are well studied in~\cite{sgdconvergence}.
However, 
\cite{sgdvr} has shown that the variance in stochastic gradient is 
the key factor limiting the convergence rate of SGD. 
Consequently,
 many SGD variants such as SAG~\cite{sag}, SVRG~\cite{svrg}, S3DG~\cite{s3gd} \eat{, Catalyst~\cite{catalyst}} have been developed to reduce the variance.
 The convergence rate of these variants has been greatly improved 
 in both theory and practice in terms of the number of iterations required to reach a certain accuracy. 
 However, the optimization cost of these methods are not negligible, 
 causing the training cost per iteration to increase substantially.

There are also studies~\cite{adagrad} on the effect
of learning rate on the convergence rate of SGD. 
Naturally, reducing the multiplier of gradient in updates will 
reduce the variance in each update.
This idea motivates us to study if we can scale 
down those stochastic gradients with larger variance by using a smaller learning rate, 
while making up the effects of those gradients by increasing their sampling frequency.
Based on this intuition, 
we propose to accelerate
 the SGD training based on the idea of active learning~\cite{activesurvey1,activesurvey2}. 
 Active learning was originally proposed to select a 
 set of labeled training data to maximize the accuracy of model. 
 \cite{activesampling2} uses the idea of weighted sampling 
 to maximize the information gain of active learning. 
 However, in our Active Sampler, 
 all training data are already labeled, 
 and the active selection is to maximize the learning speed of a passive learning model.

Active Sampler is also related to feature selection methods~\cite{systemfs}. 
Both of them assume that not all the training data are informative for model construction. 
The difference is that feature selection methods find the most informative columns in the training data, 
 whereas Active Sampler finds the most informative rows.

\section{Conclusion}
\label{sec:conclusion}

SGD algorithms are playing a central role in the model training of complex data analytics, 
where sampled training data are used at each training iteration. 
Uniform sampling and sequential access have been commonly used due to their simplicity. 
In this paper, we study how the sampling method can affect the training speed as a means to facilitate analytics at scale.
Based on the inspiration from active learning, 
we propose Active Sampler which has
sampling frequency that is proportional to the magnitude of gradient.
We show the correctness and optimality of Active Sampler in theory, 
and developed a set of schemes to make the implementation light-weight.
Experiments show that Active Sampler can speedup the training procedure of SVM, feature selection and deep learning by 1.6-2.2x, compared with the uniform sampling. 
Results also demonstrate that Active Sampler has a significant effect on 
reducing the variance of the stochastic gradient, making the training process much more stable.

\small

\bibliographystyle{abbrv}
\bibliography{assgd}

\begin{thebibliography}{10}

\bibitem{gdsurvey}
D.~P. Bertsekas.
\newblock Incremental gradient, subgradient, and proximal methods for convex
  optimization: A survey.
\newblock {\em Optimization for Machine Learning}, 2011.

\bibitem{largesgd}
L.~Bottou.
\newblock Large-scale machine learning with stochastic gradient descent.
\newblock In {\em COMPSTAT}. 2010.

\bibitem{boydcvx}
S.~Boyd and L.~Vandenberghe.
\newblock {\em Convex optimization}.
\newblock Cambridge university press, 2004.

\bibitem{systemdl2}
J.~Dean, G.~Corrado, R.~Monga, K.~Chen, M.~Devin, M.~Mao, A.~Senior, P.~Tucker,
  K.~Yang, Q.~V. Le, et~al.
\newblock Large scale distributed deep networks.
\newblock In {\em NIPS}, 2012.

\bibitem{newton1}
J.~E. Dennis, Jr and J.~J. Mor{\'e}.
\newblock Quasi-newton methods, motivation and theory.
\newblock {\em SIAM Review}.

\bibitem{adagrad}
J.~Duchi, E.~Hazan, and Y.~Singer.
\newblock Adaptive subgradient methods for online learning and stochastic
  optimization.
\newblock {\em JMLR}, 2011.

\bibitem{systemmd}
T.~Elgamal, M.~Yabandeh, A.~Aboulnaga, W.~Mustafa, and M.~Hefeeda.
\newblock spca: Scalable principal component analysis for big data on
  distributed platforms.
\newblock In {\em SIGMOD}, 2015.

\bibitem{svi}
M.~D. Hoffman, D.~M. Blei, C.~Wang, and J.~Paisley.
\newblock Stochastic variational inference.
\newblock {\em JMLR}, 2013.

\bibitem{logistic}
D.~W. Hosmer~Jr, S.~Lemeshow, and R.~X. Sturdivant.
\newblock {\em Applied logistic regression}, volume 398.
\newblock John Wiley \& Sons, 2013.

\bibitem{sgdpca}
A.~Ilin and T.~Raiko.
\newblock Practical approaches to principal component analysis in the presence
  of missing values.
\newblock {\em JMLR}, 2010.

\bibitem{svrg}
R.~Johnson and T.~Zhang.
\newblock Accelerating stochastic gradient descent using predictive variance
  reduction.
\newblock In {\em NIPS}, 2013.

\bibitem{cnn}
A.~Krizhevsky, I.~Sutskever, and G.~E. Hinton.
\newblock Imagenet classification with deep convolutional neural networks.
\newblock In {\em NIPS}, 2012.

\bibitem{systemglm}
A.~Kumar, J.~Naughton, and J.~M. Patel.
\newblock Learning generalized linear models over normalized data.
\newblock In {\em SIGMOD}, 2015.

\bibitem{paraserver}
M.~Li, D.~G. Andersen, J.~W. Park, A.~J. Smola, A.~Ahmed, V.~Josifovski,
  J.~Long, E.~J. Shekita, and B.-Y. Su.
\newblock Scaling distributed machine learning with the parameter server.
\newblock In {\em OSDI}, 2014.

\bibitem{minibatch}
M.~Li, T.~Zhang, Y.~Chen, and A.~J. Smola.
\newblock Efficient mini-batch training for stochastic optimization.
\newblock In {\em SIGKDD}, 2014.

\bibitem{largelasso}
J.~Liu, J.~Chen, and J.~Ye.
\newblock Large-scale sparse logistic regression.
\newblock In {\em SIGKDD}, 2009.

\bibitem{activesampling2}
B.~Long, O.~Chapelle, Y.~Zhang, Y.~Chang, Z.~Zheng, and B.~Tseng.
\newblock Active learning for ranking through expected loss optimization.
\newblock In {\em SIGIR}, 2010.

\bibitem{url}
J.~Ma, L.~K. Saul, S.~Savage, and G.~M. Voelker.
\newblock Identifying suspicious urls: an application of large-scale online
  learning.
\newblock In {\em ICML}, 2009.

\bibitem{s3gd}
Y.~Mu, W.~Liu, and W.~Fan.
\newblock Stochastic gradient made stable: A manifold propagation approach for
  large-scale optimization.
\newblock {\em arXiv:1506.08350}, 2015.

\bibitem{sgdconvergence}
A.~Nemirovski, A.~Juditsky, G.~Lan, and A.~Shapiro.
\newblock Robust stochastic approximation approach to stochastic programming.
\newblock {\em SIAM Journal on Optimization}, 2009.

\bibitem{activesurvey2}
M.~Prince.
\newblock Does active learning work. a review of the research.
\newblock {\em Journal of Engineering Education-Washington}, 2004.

\bibitem{sgd}
H.~Robbins and S.~Monro.
\newblock A stochastic approximation method.
\newblock {\em The Annals of Mathematical Statistics}, 1951.

\bibitem{sag}
M.~Schmidt, N.~L. Roux, and F.~Bach.
\newblock Minimizing finite sums with the stochastic average gradient.
\newblock {\em arXiv:1309.2388}, 2013.

\bibitem{activesurvey1}
B.~Settles.
\newblock Active learning literature survey.
\newblock {\em University of Wisconsin, Madison}, 2010.

\bibitem{pegasos}
S.~Shalev-Shwartz, Y.~Singer, N.~Srebro, and A.~Cotter.
\newblock Pegasos: Primal estimated sub-gradient solver for svm.
\newblock {\em Mathematical Programming}, 2011.

\bibitem{statisticalml}
V.~N. Vapnik and V.~Vapnik.
\newblock {\em Statistical learning theory}, volume~1.
\newblock Wiley New York, 1998.

\bibitem{sgdvr}
C.~Wang, X.~Chen, A.~J. Smola, and E.~P. Xing.
\newblock Variance reduction for stochastic gradient optimization.
\newblock In {\em NIPS}, 2013.

\bibitem{singa}
W.~Wang, G.~Chen, A.~T.~T. Dinh, J.~Gao, B.~C. Ooi, K.-L. Tan, and S.~Wang.
\newblock Singa: Putting deep learning in the hands of multimedia users.
\newblock In {\em ACM MM}, 2015.

\bibitem{batchgd}
D.~R. Wilson and T.~R. Martinez.
\newblock The general inefficiency of batch training for gradient descent
  learning.
\newblock {\em Neural Networks}, 2003.

\bibitem{systemdl1}
F.~Yan, O.~Ruwase, Y.~He, and T.~Chilimbi.
\newblock Performance modeling and scalability optimization of distributed deep
  learning systems.
\newblock In {\em SIGKDD}, 2015.

\bibitem{systemfs}
C.~Zhang, A.~Kumar, and C.~R{\'e}.
\newblock Materialization optimizations for feature selection workloads.
\newblock In {\em SIGMOD}, 2014.

\bibitem{systemgs}
C.~Zhang and C.~R{\'e}.
\newblock Towards high-throughput gibbs sampling at scale: A study across
  storage managers.
\newblock In {\em SIGMOD}, 2013.

\end{thebibliography}

\end{document}